\def\@doletters#1{\def\@b{#1}\ifx\@b\@c\let\@a\egroup\else\@d{#1}\fi\@a}
\def\doletters#1#2{\bgroup\def\@d##1{\global\@namedef{#1##1}{\ensuremath{#2{##1}}}}\def\@a{\@doletters}\def\@c{;}\@a}
\DeclareMathOperator{\Z}{\mathbb{Z}}
\DeclareMathOperator{\R}{\mathbb R}
\DeclareMathOperator{\Res}{Res}
\DeclareMathOperator{\logTwo}{\log_2}
\DeclareMathOperator{\logThree}{\log_3}
\let\olog\log
\def\olg{\log_2}
\def\mlog#1{\olog_{#1}}
\def\mlg#1{\olg^{(#1)}}
\def\@zlg{\ifnum\@logcount=1\olg\else
\edef\reserved@a{\noexpand\mlg{\the\@logcount}}\reserved@a
\fi}
\def\@lginner{\ifx\reserved@a\lg\def\next\lg{\@lg}\else\let\next\@zlg\fi\next}
\def\@lg{\advance\@logcount1\futurelet\reserved@a\@lginner}
\def\@zlog{\ifnum\@logcount=1\olog\else
    \edef\reserved@a{\noexpand\mlog{\the\@logcount}}\reserved@a
\fi}
\def\@loginner{\ifx\reserved@a\log\def\next\log{\@log}\else\let\next\@zlog\fi\next}
\def\@log{\advance\@logcount1\futurelet\reserved@a\@loginner}
\DeclareRobustCommand{\lg}{\@logcount0\@lg}
\DeclareRobustCommand{\log}{\@logcount0\@log}
\def\dbnorm#1{\mathopen{\mid\mskip-2mu\mid}#1\mathclose{\mid\mskip-2mu\mid}}
\newif\iffinal
\renewcommand{\todo}[1][]{}
\date{}
\newtheorem{lemma}[theorem]{Lemma}
\newtheorem{proposition}[theorem]{Proposition}
\newtheorem{corollary}[theorem]{Corollary}
\newtheorem{definition}[theorem]{Definition}
\newtheorem{heuristic}[theorem]{Heuristic assumption}
\newtheorem{myproblem}[theorem]{Problem}
\newtheorem{conjecture}[theorem]{Conjecture}
\keywords{Complexity, Asymptotic optimization, Number Field Sieve. } 
\abstract{
	The classical heuristic complexity of the Number Field Sieve (NFS) involves an
  unknown function, usually noted $o(1)$ and called $\xi(N)$ throughout
  this paper, which tends to zero as the entry $N$ grows. The aim of this paper
  is to find optimal asymptotic choices of the parameters of
  NFS as $N$ grows, in order to minimize its heuristic asymptotic computational cost.
  This amounts to minimizing a function of the
  parameters of NFS bound together by a non-linear constraint.
  We provide precise asymptotic estimates of the minimizers of this
  optimization problem, which yield
  refined formulas for the asymptotic
  complexity of NFS. One of the main outcomes of this analysis is that $\xi(N)$
  has a very slow rate of convergence: We prove that it is equivalent to
  $4{\log}{\log}{\log}\,N/(3{\log}{\log}\,N)$. Moreover, $\xi(N)$ has an
  unpredictable behavior for practical estimates of the complexity. Indeed, 
  we provide an asymptotic series expansion of $\xi$ and numerical experiments indicate that this
  series starts converging only for $N>\exp(\exp(25))$, far beyond the practical range of NFS. This raises
  doubts on the relevance of NFS running time estimates that are based on
    setting $\xi=0$ in the asymptotic formula.
}
\begin{document}
\begin{NoHyper}
\articleinformation 
\end{NoHyper}

\section{Introduction}
Factoring integers and solving discrete logarithms in finite fields are
two fundamental problems in computational number theory, which are core routines
for a very large range of applications. Perhaps their most prominent and
critical use is the fact that the security of many currently deployed
cryptosystems --- e.g. RSA, finite field Diffie-Hellman, ElGamal --- relies directly on their computational
difficulty. This explains why the development of algorithms for solving these
two problems and the analysis of their complexity are central topics in
computational mathematics.

To this day, the Number Field Sieve (NFS) is the most efficient method to
factor integers and to compute discrete logarithms in prime finite fields.
It is an active research area, and implementations of NFS have led to several
record computations which give a good idea of the computational power
required to factor RSA moduli up to $2^{800}$, or to compute discrete
logarithms in prime fields of the same size~\cite{C:BGGHTZ20}. Existing software can
also be used to form reasonable estimates of the hardness of these
problems, 
up to roughly 1024 bits. 

Another approach to estimate
the
computational power needed is to use a theoretical cost analysis of NFS. 
The asymptotic complexity of the usual variant of NFS to factor an integer $N$, under various heuristic
assumptions, is known to be
\begin{equation}
  \label{eq:nfs-cplx-1}
  \exp\left( \sqrt[3]{\frac{64}{9}}(\log N)^{1/3}(\log\relax\log N)^{2/3}(1+\xi(N))\right)
\end{equation}
 where $\xi(N) \in o(1)$ as $N$ grows. This asymptotic formula is obtained by
 solving an optimization problem, which involves a number-theoretic function
 related to the notion of smoothness:

\begin{definition}
    \label{def:psi}
    Let $x\geq1$ and $y\geq1$ be two real numbers. An integer in $[1,x]$
    is \emph{$y$-smooth}
    if all of its prime factors are below $y$. We let
      \[\Psi(x,y) = \# \{ \text{integers in $[1,x]$ that are $y$-smooth}
        \}.\]
    In particular, $\Psi(x,y)/x$ is the probability that a uniformly
    random integer in $[1,x]$ is $y$-smooth. For convenience, we also use
    the notation $
        p(u,v) = \log\left(\Psi(e^u,e^v)/e^u\right)$.
\end{definition}

Unfortunately, the complexity given by Formula~\eqref{eq:nfs-cplx-1} is
not very satisfactory. First, it relies on several heuristics, most notably on
the fact that norms of random ideals in a given
number field have the same smoothness probability as random integers of the
same order of magnitude. A second point is that even if we consider that these assumptions hold, the
complexity given by Formula~\eqref{eq:nfs-cplx-1} involves a function $\xi$ which is never spelled out explicitly.

This inaccuracy conflicts with the fact that for the last few decades, the widespread development
of public-key cryptography has created a pressing need to answer the
following question:
Given
some computational power $C$, what size of RSA modulus $N$ should be considered so
that the cost of NFS exceeds $C$?
This is of interest, for example, to
regulatory bodies, see for
example~\cite[Sec.~7.5]{CSRC},
\cite[Sec.~B.2.2]{anssi-rgs}, or~\cite[Table 3.1]{enisa}.
A common way to answer this is to consider
the complexity formula, assume that $\xi = 0$ since we know $\xi(N) \in o(1)$ when $N$ grows, and use a computational record to
set a proportionality ratio. Since the 2000s, the validity of this
approach has been considered differently.
In~\cite[\S
2.4.4]{lenstraverheul2001}, the reader is warned
that $\xi$ should not be neglected for large extrapolations, and
omitting it probably yields biased
estimations.
Later work gradually moved to considering $\xi=0$ as a totally acceptable
simplification assumption, not heeding the warning. We emphasize that this method for computing key sizes is the
international standard for deployed RSA-based cryptography.

The goal of this paper is to question the relevance of neglecting $\xi$ and to give
insights on what this function hides. In particular,
assuming the standard heuristics for the complexity of NFS, we
show (Thm.~\ref{th:first_terms}) that 
\[\xi(N) = \frac{4{\log}{\log}{\log}\,
N}{3{\log}{\log}\,N} + \frac{-2\log 2+\log 3/6-2}{{\log}{\log}\, N}(1+o(1)).\]

This asymptotic
expansion of $\xi$ can be extended to a bivariate series expansion
evaluated at ${\log}{\log}{\log}\,N/{\log}{\log}\,N$ and $1/{\log}{\log}\,N$.
We provide algorithms --- together with publicly available implementations in
\textsf{SageMath} --- with
which we were able to compute more than a hundred of
terms of this asymptotic expansion. This may sound like
good news for practical formula-based keysize computations, as it seems that this
would provide more precise estimates. In fact, the contrary happens: We
observe experimentally that values of $N$ that would be useful for cryptography are
far too small compared to the radius of convergence at infinity of the series involved in the expansion of $\xi$. 
Said otherwise, this means that the exponent in the classical formula used for estimating
practical RSA key sizes is the first term of a divergent series. In order to illustrate
this phenomenon, let us consider the following example functions:
\[g_0: N\mapsto \exp\left({(\log
N)^{1/3}({\log}{\log}\,N)^{2/3}}\right);\quad
g: N\mapsto \exp\left(\frac{(\log
N)^{1/3}({\log}{\log}\,N)^{2/3}}{1+20/{\log}{\log}\,N}\right).\]

While it is true that $g$ belongs to the class $\exp((\log
N)^{1/3}({\log}{\log}\,N)^{2/3}(1+o(1)))$, the radius of convergence of
$1/(1+x)=\sum_i (-x)^i$ is $1$. This implies that if
$N\leq \exp(\exp(20))\approx 2^{699945421}$, then equalling $o(1)$ with $0$
 means estimating the exponent of the function $g$ by
evaluating the first term of
a divergent series. Numerical computations show that $g(2^{2048})\approx
2^{16}$, while $g_0(2^{2048})\approx
2^{61}$, so estimating $o(1)$ by $0$ in the case $N=2^{2048}$ yields
completely erroneous results. If one were to compare ``projected
values'' between $N=2^{512}$ and $N=2^{2048}$ based on the simplification
$o(1)=0$, the corresponding calculation would yield
$g_0(2^{2048})/g_0(2^{512})\approx 2^{28}$, compared to
$g(2^{2048})/g(2^{512})\approx 2^{9}$. Carelessly neglecting the $o(1)$
term can lead to dramatic errors in the assessment of a complexity in the
class given by Formula~\eqref{eq:nfs-cplx-1}, and of its growth as $N$ varies.

The asymptotic expansion of $\xi$ that we obtain in the complexity of NFS exhibits a behavior similar to the example function $g$. This raises questions on the relevance of using
the asymptotic formula for estimating practical cryptographic key lengths. In
particular, we obtain completely different results for the final NFS complexity depending on
how many terms we consider in our series expansion of $\xi$, see Figure~\ref{fig:zonecrypto}. This asks --- at
the very least --- for strong justification when one chooses to set
$\xi=0$ in the complexity formula.

Beyond NFS, the complexity of several algorithms is linked to the
asymptotics of smoothness probabilities. This is the case for example of
the quadratic sieve and its variants, of class group computations in
number fields, or of the elliptic curve factoring method. In all of these
cases, the methods of this article apply, and yield similar conclusions;
NFS is just an example.

\subsection*{Organization of the paper} In
Section~\ref{sec:NFS_background}, we briefly describe 
NFS and we state the optimization problem whose minimum is the complexity.
Section~\ref{sec:smoothness} provides a refined asymptotic expansion of
Dickman--De Bruijn function at infinity,
which is used to estimate the smoothness probabilities in NFS. Then
Section~\ref{sec:refined_compl} is devoted to the series expansion of the
function~$\xi$. Finally, Section~\ref{sec:expe} reports on
experimental results obtained by using the refined asymptotic formulas for the
complexity.

\subsection*{Implementation}
The computation of the asymptotic expansion of $\xi$ relies on three
algorithms described in Section~\ref{sec:coeffs_expansion}. Our
implementation in \textsf{SageMath} of these algorithms is available at
the following URL
\begin{center}
\url{https://gitlab.inria.fr/NFS_asymptotic_complexity/simulations}. 
\end{center}

At the same location, we also detail some of the
unilluminating technical calculations that are omitted for brevity in the text.

\subsection*{Notations and conventions}
\label{subsec:notations-log}
Throughout the article,
$\log x$ denotes the natural logarithm to base $e$. We use the notation $\mlog{m}$
to denote the $m$-th iterate of the $\log$ function, so that
$\mlog{m+1}=\log\circ\mlog m$.
The notation $u=\Theta(v)$ denotes: $(u=O(v)\text{ and
}v=O(u))$.
Throughout the paper, we assume a RAM computation model, where the cost of memory accesses is
neglected.

\section{Background on NFS}\label{sec:NFS_background}

In a nutshell, NFS defines two irreducible integer polynomials $f_0$
and $f_1$, and searches for integer pairs $(u,v)$ in a search space \cA\
such that the integers $\Res_x(u-vx,f_0(x))$ and $\Res_x(u-vx,f_1(x))$ are
smooth with respect to chosen smoothness bounds $B_0$ and $B_1$, where
$\Res_x$ denotes the resultant with respect to $x$. The
notations $K_i$ for $i=0,1$ denote the number fields $\bQ[x]/f_i(x)$.
The number of smooth pairs that must be collected must be at least 
$(\pi_{K_0}(B_0)+\pi_{K_1}(B_1))$, which is
the number of prime
ideals with norm at most $B_0$ and $B_1$ in the rings of integers of $K_0$ and $K_1$.
A linear algebra calculation
follows, and its dimension is again $(\pi_{K_0}(B_0)+\pi_{K_1}(B_1))$.
These two steps of the algorithm are the most costly.

In order to analyze the algorithm, we consider a simplified version,
which has of course little to do with computational feats that use the
Number Field Sieve~\cite{C:BGGHTZ20}.
In particular, we consider the straightforward ``base-$m$'' polynomial
selection. For a chosen degree $d$, we set
$m=\lceil N^{1/(d+1)}\rceil$, $f_0=x-m$, and $f_1$ of degree $d$ such
that $f_1(m)=N$ and $\dbnorm{f_1}_\infty<m$. We have $K_0=\bQ$,
and $K_1$ is a degree~$d$ number field. Side~0 is therefore called the
rational side, and side~1 is
called the
algebraic side. We assume that the smoothness bounds $B_0$ and $B_1$ can be
chosen equal to the same bound $B$ without increasing the overall asymptotic
complexity. Letting $B_0$ and $B_1$ have distinct values would add an extra
layer of technicality to the analysis, but we acknowledge that it would be
interesting to investigate the general case in future work.
In our simplified NFS algorithm, we pick the pairs $(u,v)$ from a set $\cA=[-A,A]^2$,
for some bound~$A$.
The integers that we check for smoothness in NFS are bounded by $M_0$ and
$M_1$, which we define as follows
        \begin{align*}
          \lvert\Res_x(u-vx,f_0(x))\rvert&\leq M_0 = (m+1)A,\\
            \llap{and }
            \lvert\Res_x(u-vx,f_1(x))\rvert&\leq M_1 = (d+1)mA^d.
        \end{align*}
        
It is important to point out the heuristic nature of the estimate given by
Formula~\eqref{eq:nfs-cplx-1}, which is related to the estimation of the
probabilities of smoothness.

\begin{heuristic}
    \label{heuristic}
    In the
    Number Field Sieve algorithm, the probability 
    that the two integers
    $\Res_x(u-vx, f_0)$ 
    and
    $\Res_x(u-vx, f_1)$ 
    are simultaneously $B$-smooth,
    as $(u,v)$ are picked uniformly from the search space \cA, is given by the
    probability that two random integers of
    the same size are $B$-smooth, which is $
    \frac{\Psi(M_0,B)}{M_0}
    \cdot
    \frac{\Psi(M_1,B)}{M_1}.$
\end{heuristic}

This assumption is very bold, and in fact wrong in several ways. 
Correcting terms, which are actually used in practice, can be used to
lessen the gap between $\Res_x(u-vx, f_0)$ and integers of the same
size~\cite{Murphy99,barbulescu-lachand}. 
However, this heuristic is not that wrong asymptotically, as
evidenced by~\cite[Eq.~(1.21)]{Granville}.
We also note that if the number field is a random variable, positive
results do exist~\cite{lee-venkatesan}.

\medskip
Our goal is not to get rid of this heuristic assumption. The usual
complexity analysis of NFS already uses it as a base and we will do the
same to improve on the asymptotic estimate in Formula~\eqref{eq:nfs-cplx-1}.

Our simplified version of NFS
has three main parameters that we can tune, as functions of
$\log N$, in order
to optimize the asymptotic complexity: the degree $d=\deg f_1$,
and the bounds $A$ and $B$. The time complexity of NFS is then
the sum of the time taken by the search for smooth pairs and the time
taken by linear algebra. We write these as
\begin{align*}
  C_{\text{search}} &= A^2\cdot C_{\text{test}},\\
    C_{\text{linear algebra}} &=
    (\pi_{K_0}(B)+\pi_{K_1}(B))^2(1+o(1)).
\end{align*}

In the expressions above,
        $C_{\text{test}}$ is the time spent per pair $(u,v)$.
Several techniques can be used to
bring $C_{\text{test}}$ to an amortized cost of $O(1)$; sieving is the
most popular, but ECM-testing also works (heuristically), and it is also
possible to detect smooth values with product trees as
in~\cite{djb-sf-2002}. Furthermore, we counted a quadratic cost for linear
algebra, as this can be carried out with sparse linear algebra
algorithms such as~\cite{Wiedemann86}.
We wish to minimize the cost, subject to
one constraint: We need enough smooth pairs, as we mentioned in the first
paragraph of this section.
\begin{equation*}
    \label{eq:constraint}
    A^2
    \cdot
    \frac{\Psi(M_0,B)}{M_0}
    \cdot
    \frac{\Psi(M_1,B)}{M_1}
\geq
    \pi_{K_0}(B)
  + \pi_{K_1}(B).
\end{equation*}

    Of course, we
can assume that this is an equality, since otherwise decreasing 
$A$ would decrease the complexity.

We are looking for expressions for $d$, $A$, and $B$
that are functions
of $\nu=\log N$.
Let now $a(\nu)=\log A(\nu)$ and $b(\nu)=\log B(\nu)$.
Note that all functions are expected to tend to infinity as $\nu=\log N$ tends to
infinity and that $\log(\pi_{K_0}(B(\nu)))=b(\nu)(1+o(1))$ and
$\log(\pi_{K_1}(B(\nu)))=b(\nu)(1+o(1))$ by Chebotarev's density theorem.
We
have 
\begin{align*}
    \log C_{\text{search}} &= O(1)+2a(\nu),\\
    \log C_{\text{linear algebra}}
    &=O(1)+\log\left((\pi_{K_0}(B(\nu))+\pi_{K_1}(B(\nu)))^2\right)\\
    &=O(1)+2b(\nu),\\
    \log (C_{\text{search}}+C_{\text{linear algebra}})
    &=O(1)+\log\max(C_{\text{search}},C_{\text{linear algebra}})\\
    &=O(1)+2\max(b(\nu),a(\nu)).
\end{align*}

Likewise, the constraint on the probabilities can be rewritten, using the notation
$p(u,v)$ from Definition~\ref{def:psi} :
\begin{align*}
    \displaystyle 2 a(\nu)
    + \sum_{i \in \{0,1\}} p(\log M_i, b)
    &= \log(\pi_{K_0}(B(\nu)) + \pi_{K_1}(B(\nu)))\\
    &= \log\left(2(1+o(1))B(\nu)\right)\\
    &= O(1) + b(\nu).
\end{align*}

Given the inaccuracy that exists in Formula~\eqref{eq:nfs-cplx-1}, we
can profit from some
simplifications before we formulate our optimization problem. In light of
this, it is sufficient to minimize the function
$\max(a(\nu),b(\nu))$, as this would imply a cost that would be at most within a constant
factor of the optimal. Likewise, the $O(1)$ in the constraint can be
dropped. Using this latter argument, we can also take into account only
the important parts of the bounds $M_0$ and $M_1$, namely $mA$ and
$mA^d$. Our optimization problem is therefore rewritten as the
following simplified problem, which is our main target here as well as in
Section~\ref{sec:refined_compl}.

\begin{myproblem}[Simplified optimization problem]\label{pb:optim2}
  Find three functions, \[a(\nu), b(\nu), d(\nu) :\bR_{>0}\rightarrow\R_{>0}\] which 
for all $\nu\in\bR_{>0}$ minimize
    $\max(a(\nu), b(\nu))$
subject to the constraint
\begin{equation}\label{eq:main_constraint} p\left(a+ \nu/d, b\right) + p\left(d\,a+ \nu/d,
b\right)+2\,a - b = 0.\end{equation}
\end{myproblem}

To deal with the optimization problem, the usual analysis of NFS relies
on the following result.
\begin{proposition}[Canfield-Erd\H os-Pomerance \cite{CEP} and De
    Bruijn \cite{DeBruijn}]
    \label{prop:cep1}
    Let $\varepsilon \in ]0,1[$. If $3 \leq x/y \leq (1-\varepsilon)x/\log x$ then, as $x/y \rightarrow + \infty$:
    \[p(x, y) = -(x/y)\cdot \log(x/y)\cdot(1+o(1)).\] 
\end{proposition}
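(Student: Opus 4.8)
The plan is to recognise this statement as the Canfield--Erd\H os--Pomerance estimate on counts of smooth integers, transcribed into logarithmic coordinates, and to reduce to it after checking that the hypothesis $3\le x/y\le(1-\varepsilon)x/\log x$ lies inside the range in which that estimate is uniform. First I set $X=e^{x}$ and $Y=e^{y}$, so that $\log X=x$, $\log Y=y$, and I put $u=\log X/\log Y=x/y$; by Definition~\ref{def:psi} this makes $p(x,y)=\log\!\bigl(\Psi(X,Y)/X\bigr)$. The theorem of Canfield, Erd\H os and Pomerance~\cite{CEP}, together with de Bruijn's asymptotic $\log\rho(u)=-u\log u\,(1+o(1))$ for the Dickman function~\cite{DeBruijn}, asserts that $\Psi(X,Y)=X\,u^{-u(1+o(1))}$ as $u\to+\infty$, uniformly provided $Y\ge(\log X)^{1+\delta}$ for some fixed $\delta>0$.

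Next I would verify the range inclusion. The inequality $x/y\le(1-\varepsilon)x/\log x$ is equivalent to $y\ge\log x/(1-\varepsilon)$, i.e.\ to $\log Y\ge\log\log X/(1-\varepsilon)$; since $1/(1-\varepsilon)>1$, I fix any $\delta\in\bigl(0,\varepsilon/(1-\varepsilon)\bigr)$ and obtain $\log Y\ge(1+\delta)\log\log X$, that is $Y\ge(\log X)^{1+\delta}$, which is exactly the hypothesis of~\cite{CEP}; and $x/y\ge3$ together with $x/y\to+\infty$ supplies $u\to+\infty$. It then only remains to take logarithms:
\[p(x,y)=\log\!\bigl(\Psi(X,Y)/X\bigr)=\log\!\bigl(u^{-u(1+o(1))}\bigr)=-u\log u\,(1+o(1))=-(x/y)\log(x/y)\,(1+o(1)).\]

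If a self-contained argument were wanted in place of the citation, the upper bound is the easy half: Rankin's trick bounds $\Psi(X,Y)$ by $X^{\sigma}\prod_{p\le Y}(1-p^{-\sigma})^{-1}$ for every $\sigma\in(0,1)$, and the choice $\sigma=1-\log u/\log Y$ --- which the stated range keeps strictly between $0$ and $1$ --- together with partial summation over the primes gives $\log\Psi(X,Y)\le\log X-u\log u+o(u\log u)$. The matching lower bound is the genuinely delicate direction, and is essentially the entire content of~\cite{CEP,DeBruijn}: a single counting family, such as products of $\lceil u\rceil$ distinct primes near $X^{1/\lceil u\rceil}$, only yields $p(x,y)\ge-(1+o(1))\,u\log\log X$, which is too weak whenever $u$ is far below the top of the range, so one is forced to follow $\Psi(X,Y)/X$ through a Buchstab-type recursion and compare it with $\rho(u)$. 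Either way, the single real obstacle is uniformity of the secondary terms over the whole range --- each must be shown to be $o(u\log u)$ --- and this is precisely what the ceiling $x/y\le(1-\varepsilon)x/\log x$ secures: it pins $y=\log Y$ a fixed constant factor above $\log\log X$ (equivalently, $\log u/\log Y$ bounded away from $1$), outside of which those terms are no longer controlled and the clean asymptotic can fail.
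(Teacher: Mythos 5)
The paper gives no proof of this proposition at all---it is quoted as a known result of Canfield--Erd\H os--Pomerance and de Bruijn---and your argument is exactly that reduction: the change of variables $X=e^x$, $Y=e^y$, $u=x/y$ correctly identifies $p(x,y)$ with $\log(\Psi(X,Y)/X)$, and your verification that $x/y\le(1-\varepsilon)x/\log x$ is equivalent to $Y\ge(\log X)^{1+\delta}$ with $\delta\in\bigl(0,\varepsilon/(1-\varepsilon)\bigr)$ places the parameters inside the uniform range of the cited theorem. Your closing sketch (Rankin's trick for the upper bound, the genuinely hard lower bound being the content of the citations) is accurate but optional, so the proposal is correct and matches the paper's treatment.
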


Using this low-order result, the analysis is completely classical, and
yields the following well-known expressions, see e.g.~\cite[§11]{BuLePo93}:
\begin{proposition}
\label{prop:first_term}
Let $(a,b,d)$ be a minimizer for Problem~\ref{pb:optim2}.
Then
    \begin{align*}
      a(\nu)&=(8/9)^{1/3}\nu^{1/3}(\log \nu)^{2/3}(1+o(1)),\\
        b(\nu)&=(8/9)^{1/3}\nu^{1/3}(\log \nu)^{2/3}(1+o(1)),\\
        d(\nu) &= (3 \nu/\log \nu)^{1/3}(1+o(1)).\\
        \intertext{Furthermore, }
        (a(\nu)+\nu/d(\nu))/b(\nu)&=\frac12 (3 \nu/\log \nu)^{1/3}(1+o(1)),\\
        (d(\nu)a(\nu)+\nu/d(\nu))/b(\nu)&=\frac32 (3 \nu/\log
        \nu)^{1/3}(1+o(1)). 
    \end{align*}
\end{proposition}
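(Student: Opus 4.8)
The plan is to carry out the classical Lagrangian optimization (see e.g.~\cite[\S 11]{BuLePo93}), keeping only leading-order terms and using Proposition~\ref{prop:cep1} to replace each occurrence of $p$ by its dominant term. I would start from a minimizer $(a,b,d)$ and record two a priori facts read off from~\eqref{eq:main_constraint}: both $p(a+\nu/d,b)$ and $p(d a+\nu/d,b)$ are $\le 0$ (logarithms of probabilities), so $2a-b = -p(a+\nu/d,b)-p(d a+\nu/d,b)\ge 0$, i.e. $a\ge b/2$; and, since one can plug a suitable ansatz $a=b\asymp\nu^{1/3}(\log\nu)^{2/3}$, $d\asymp(\nu/\log\nu)^{1/3}$ into~\eqref{eq:main_constraint} and tune the constants to make it an exact feasible triple, the optimal value satisfies $\max(a,b)=O(\nu^{1/3}(\log\nu)^{2/3})$. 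A short bootstrap then shows $a,b,d\to\infty$, $\nu/d\to\infty$, $b/\log\nu\to\infty$, and hence that both pairs $(x,y)=(a+\nu/d,b)$ and $(x,y)=(d a+\nu/d,b)$ eventually satisfy the hypothesis $3\le x/y\le(1-\varepsilon)x/\log x$ of Proposition~\ref{prop:cep1} (here $x/y\to\infty$ while $y=b$ is polynomially larger than $\log x=\tfrac23\log\nu\,(1+o(1))$). So the proposition applies and, writing $x_0=a+\nu/d$ and $x_1=d a+\nu/d$, the constraint becomes
\[ -\frac{x_0}{b}\log\frac{x_0}{b}-\frac{x_1}{b}\log\frac{x_1}{b}+2a-b = o\bigl(\max(a,b)\bigr). \]

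Next I would reduce to the balanced case $a=b\,(1+o(1))$: minimizing $\max(a,b)$ drives the optimum to the kink $a=b$, because from any feasible point with $a\neq b$ one can decrease the larger coordinate while increasing the smaller one and still solve~\eqref{eq:main_constraint} (this is where one checks that the partial derivatives of the left-hand side of~\eqref{eq:main_constraint} with respect to $a$ and $b$ have appropriate signs); equivalently, one proves $\max(a,b)\le(8/9)^{1/3}\nu^{1/3}(\log\nu)^{2/3}(1+o(1))$ and the matching lower bound separately. Setting $b=\beta\,\nu^{1/3}(\log\nu)^{2/3}$ and $d=\delta\,\nu^{1/3}(\log\nu)^{-1/3}$ with $\beta,\delta>0$, one has $\nu/d=\delta^{-1}\nu^{2/3}(\log\nu)^{1/3}$, which dominates $a=b$; therefore $x_0=(\beta\delta)^{-1}\nu^{2/3}(\log\nu)^{1/3}(1+o(1))$, $x_1=(\beta\delta+\delta^{-1})\nu^{2/3}(\log\nu)^{1/3}(1+o(1))$, so $x_0/b$ and $x_1/b$ are each $\Theta(\nu^{1/3}(\log\nu)^{-1/3})$ with $\log(x_0/b)=\log(x_1/b)=\tfrac13\log\nu\,(1+o(1))$. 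Substituting into the displayed constraint and dividing by $\nu^{1/3}(\log\nu)^{2/3}$ collapses it to
\[ 3\beta^2=\beta\delta+\frac{2}{\delta}+o(1). \]

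It remains to minimize $\beta$ subject to this relation. Treating $\beta$ as the implicit function of $\delta$ given by $3\beta^2=\beta\delta+2/\delta$, i.e. $\beta(\delta)=\tfrac16\bigl(\delta+\sqrt{\delta^2+24/\delta}\bigr)$, and imposing $\mathrm{d}\beta/\mathrm{d}\delta=0$ yields $\beta=2/\delta^2$; reinjecting gives $12/\delta^4=4/\delta$, hence $\delta^3=3$, so $\delta=3^{1/3}$ and $\beta=2\cdot3^{-2/3}=(8/9)^{1/3}$. This is precisely $a(\nu)=b(\nu)=(8/9)^{1/3}\nu^{1/3}(\log\nu)^{2/3}(1+o(1))$ and $d(\nu)=3^{1/3}\nu^{1/3}(\log\nu)^{-1/3}(1+o(1))=(3\nu/\log\nu)^{1/3}(1+o(1))$. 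The last two estimates follow by back-substitution, using $\beta\delta=2\cdot3^{-1/3}$ and $\beta\delta+\delta^{-1}=3^{2/3}$:
\[ \frac{a+\nu/d}{b}=\frac{x_0}{b}\sim\frac{1}{\beta\delta}\,\nu^{1/3}(\log\nu)^{-1/3}=\tfrac12\,(3\nu/\log\nu)^{1/3}, \]
\[ \frac{d a+\nu/d}{b}=\frac{x_1}{b}\sim\frac{\beta\delta+\delta^{-1}}{\beta}\,\nu^{1/3}(\log\nu)^{-1/3}=\tfrac32\,(3\nu/\log\nu)^{1/3}. \]

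The displayed computation is routine; the delicate points are the two reductions. First, one must legitimately invoke Proposition~\ref{prop:cep1}, which needs the a priori bounds to be bootstrapped once (crude bounds $\Rightarrow$ the smoothness estimate applies $\Rightarrow$ the constraint pins down the exponents of $a,b,d$, confirming the hypotheses hold with room to spare); establishing that the scaling $b\asymp\nu^{1/3}(\log\nu)^{2/3}$, $d\asymp(\nu/\log\nu)^{1/3}$ is forced — not merely consistent — is the part I expect to require the most care. Second, one must ensure the $o(1)$ perturbation in $3\beta^2=\beta\delta+2/\delta+o(1)$ moves the constrained minimizer by at most a factor $1+o(1)$; this follows because the minimum of the smooth function $\beta(\delta)$ above is non-degenerate ($\beta'(3^{1/3})=0$, $\beta''(3^{1/3})>0$), hence stable under small perturbations. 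Everything else — the sign checks for the balancing step and the various dominations among $\nu^{1/3}(\log\nu)^{2/3}$, $\nu^{2/3}(\log\nu)^{1/3}$ and $\log\nu$ — is the classical bookkeeping referred to in the text.
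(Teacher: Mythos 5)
Your proposal is essentially the argument the paper itself relies on: the paper gives no proof of Proposition~\ref{prop:first_term}, declaring the analysis ``completely classical'' and deferring to \cite[\S 11]{BuLePo93}, and your reconstruction follows exactly that classical route (apply Proposition~\ref{prop:cep1}, balance $a\sim b$, parametrize $b=\beta\nu^{1/3}(\log\nu)^{2/3}$, $d=\delta\nu^{1/3}(\log\nu)^{-1/3}$, reduce the constraint to $3\beta^2=\beta\delta+2/\delta+o(1)$, and minimize). The arithmetic checks out: $\delta=3^{1/3}$, $\beta=2\cdot 3^{-2/3}=(8/9)^{1/3}$, and the back-substituted ratios $\tfrac12(3\nu/\log\nu)^{1/3}$ and $\tfrac32(3\nu/\log\nu)^{1/3}$ are correct (note a harmless slip: $x_0\sim\delta^{-1}\nu^{2/3}(\log\nu)^{1/3}$, without the extra $\beta$; your expression for $x_0/b$ is nonetheless right). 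The one substantive caveat is the part you yourself flag: you assume rather than prove that any minimizer is forced into the regime $a\asymp b\asymp\nu^{1/3}(\log\nu)^{2/3}$, $d\asymp(\nu/\log\nu)^{1/3}$ with $a\sim b$, and in particular you must rule out degenerate choices (e.g.\ $b=O(\log\nu)$) where Proposition~\ref{prop:cep1} does not even apply, before the bootstrap can start. A fully self-contained proof needs that step carried out, but since the paper delegates precisely this bookkeeping to the citation, your sketch is at the same level of rigor as the text it would replace.
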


\section{Smoothness}\label{sec:smoothness}

In the following paragraphs we will encounter multiple times two functions,
which we denote by
$\cX:\eta\mapsto \log_2 \eta/\log\eta$ and
$\cY:\eta\mapsto 1/\log\eta$. 
 
The notation $\bR[[X,Y]]$ is used for bivariate formal series with real
coefficients,
and for $\bfS\in\bR[[X,Y]]$ and $i \in \mathbb{Z}_{\geq 0}$, the notation
$\bfS^{(i)}$ stands for the truncation of \bfS\ to total degree less than
or equal to $i$.

We introduce the following class of functions, to capture the asymptotic
behavior of several functions of interest at infinity.

\begin{definition}\label{def:class-C} The class of functions $\cC$ is the set of
functions $f$ defined over a neighborhood of $+\infty$ with values in~$\mathbb R$ such that 
  \[\exists\,
    \bfF \in \bR[[X,Y]],\ \forall n \in \bZ _{\geq 0},
    f(\eta) = \bfF^{(n)}(\cX, \cY) + o(\cY^n).\]
We say that $\bfF$ is the series associated to $f$.
\end{definition}

An important property of $\cC$ that we will use intensively in Section
\ref{sec:refined_compl} is that if $f \in \cC$ and its associated series
\bfF\ satisfies $\bfF(0,0) \neq 0$ then $1/f$ stays in $\cC$. If moreover
$\mathbf F(0,0)$ is positive then $\log f$ is also in $\cC$.

Definition~\ref{def:class-C} deserves several comments. First, the map
from $f\in\cC$ to its associated series \bfF\ is clearly not injective,
since for example the function $1/\eta$ is in \cC, and its associated
series is zero. It is also important to notice that the truncation
$\bfF^{(n)}$ that appears in Definition~\ref{def:class-C} is necessary,
as the \emph{evaluation} of \bfF\ at a given value need not make sense:
the series \bfF\ is only intended to capture the asymptotic behavior of the function $f$ at
infinity.\footnote{In fact (but we will not need this), it is easy to extend
Borel's lemma to show that the link from $\cC$ to \bfF\ is
surjective: Any series $\sum_{i,j}a_{i,j}X^iY^j$, regardless of any notion of convergence, can be
\iftrue
realized as the asymptotic expansion of a function in \cC\ written as
$f(x)=\sum_{i,j}a_{i,j}\tau_{i+j}(x)\cX^i\cY^j$ where
$\tau_n$ is a smooth transition function from $0$ to $1$ whose transition
point is adjusted as a function of $\sum_{i+j=n}|a_{i,j}|$.
\else
realized as the asymptotic expansion of some function in \cC.
To this end,
let $b$ be a positive bump function with support in $[0,1]$, and
$\tau(x)=\int_{-\infty}^x b(t)\mathrm dt/\int_{-\infty}^{+\infty}
b(t)\mathrm dt$. Let $B_n=\sum_{i+j=n}|a_{i,j}|$, $A_n=\exp\exp B_n$,
and 
write
$f(x)=\sum_{i,j}a_{i,j}\tau(x-A_{i+j})\cX^i\cY^j$.
We remark
that
$\tau(x-A_{i+j})\leq(\log\log x)/B_{i+j}$, and the proof that $f$
has the desired asymptotic expansion
follows.
\fi}

\subsection{Smoothness results}

The precision of an asymptotic expansion of the NFS complexity is tightly connected to the precision in results
regarding smoothness probabilities.
Canfield, Erd\H os and Pomerance actually prove a better result than
Proposition~\ref{prop:cep1}.

\begin{theorem}[Canfield, Erd\H os and Pomerance~\cite{CEP}]\label{th:CEP}
For all $x \geq 1$ and for all $u = x/y \geq 3$, we have
        \[p(x,y) \geq -u\log u \cdot p(u),\] where $p(u)=
        1+\mathcal X(u)-\mathcal Y(u)+ \mathcal X(u) \mathcal Y(u)  - \mathcal Y(u)^2 +
    O\left(\mathcal X(u)^2 \mathcal Y(u)\right).$
\end{theorem}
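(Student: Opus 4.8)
The plan is to deduce this inequality from two classical facts about smoothness: the comparison of $\Psi(X,Y)$ with the Dickman--de Bruijn function $\rho$, and the saddle-point expansion of the latter. Write $X=e^{x}$, $Y=e^{y}$, so that $u=x/y=\log X/\log Y\ge 3$ and $p(x,y)=\log\bigl(\Psi(X,Y)/X\bigr)$; the statement is then the lower bound $\Psi(X,Y)\ge X\exp\!\bigl(-u\log u\cdot p(u)\bigr)$. I will produce a function $p(u)$ with the stated asymptotic expansion for which this holds, essentially by setting $p(u)\approx-\log\rho(u)/(u\log u)$ in the main range of $u$, enlarging it on bounded ranges, and checking that the remaining, very large, values of $u$ are handled by crude bounds.

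For the expansion of $\rho$, I would start from de Bruijn's saddle-point formula, which states that as $u\to\infty$ one has $\log\rho(u)=\gamma-u\,w(u)+\int_{0}^{w(u)}\tfrac{e^{t}-1}{t}\,dt+\tfrac12\log w'(u)+O(1/u)$, where $w=w(u)$ is the positive root of $e^{w}=1+uw$. Integrating by parts and using $e^{w}=1+uw$ gives $\int_{0}^{w}\tfrac{e^{t}-1}{t}\,dt=u+u/w+O\!\bigl(u/(\log u)^{2}\bigr)$, and $w'(u)\asymp1/u$, so
\[
\log\rho(u)=-u\Bigl(w(u)-1-\tfrac{1}{w(u)}\Bigr)+O\!\bigl(u/(\log u)^{2}\bigr).
\]
Iterating $w=\log(1+uw)=\log u+\log w+O\!\bigl(1/(u\log u)\bigr)$ yields
\[
w(u)=\log u+\log\log u+\frac{\log\log u}{\log u}+O\!\left(\Bigl(\frac{\log\log u}{\log u}\Bigr)^{2}\right),
\]
and plugging this in (with $1/w(u)=1/\log u-\log\log u/(\log u)^{2}+O(\cdot)$), dividing by $-u\log u$, and collecting terms gives
\[
\frac{-\log\rho(u)}{u\log u}=1+\mathcal X(u)-\mathcal Y(u)+\mathcal X(u)\mathcal Y(u)-\mathcal Y(u)^{2}+O\!\bigl(\mathcal X(u)^{2}\mathcal Y(u)\bigr),
\]
the errors $O(u/(\log u)^{2})$ and $O(\log u)$ contributing, after normalization, $O(\mathcal Y(u)^{3})$ and $O(1/u)$ respectively, both of which are $O(\mathcal X(u)^{2}\mathcal Y(u))$.

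To finish I would transfer this to $\Psi$ and cover all $x\ge1$, $u\ge3$ with absolute constants. In the main range --- say $3\le u\le(\log X)/(\log\log X)^{2}$, which comfortably contains the NFS regime --- the Hildebrand/de Bruijn estimate $\Psi(X,Y)=X\rho(u)\bigl(1+O(\log u/\log Y)\bigr)$ gives $\log(\Psi(X,Y)/X)=\log\rho(u)+O(\log u)\ge-u\log u\cdot p(u)$ after absorbing the error as above. For larger $u$ the right-hand side $-u\log u\cdot p(u)\sim-u\log u$ is so negative that elementary constructions suffice: $\Psi(X,Y)$ is at least the number of $a\in\N^{\pi(Y)}$ with $\sum_{i}a_{i}\log p_{i}\le\log X$, hence at least $\binom{\pi(Y)+\lfloor u\rfloor}{\lfloor u\rfloor}$, and a direct computation shows this beats $X\exp(-u\log u\cdot p(u))$ throughout that range (the lattice-point count being huge precisely when $Y$, and hence $\pi(Y)$, is small). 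For $u$ in a bounded interval one uses $\Psi(X,Y)\gg_{u}X\rho(u)$ with $\rho(u)>0$ and enlarges the implied constant in the error term of $p(u)$, which is legitimate since $\mathcal X(u)^{2}\mathcal Y(u)$ is then bounded below. The main obstacle is the second step: one must carry the expansion of $w(u)$ far enough and verify that every genuinely lower-order piece --- Euler's constant, the correction to the integral, the $\tfrac12\log w'(u)$ factor, the $\Psi/\rho$ error --- lands in $O(u\log u\cdot\mathcal X(u)^{2}\mathcal Y(u))$ after dividing out; the rest, including making the constants in the case analysis absolute, is bookkeeping.
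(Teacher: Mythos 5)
You should first be aware that the paper does not prove this statement at all: it is quoted verbatim from Canfield--Erd\H os--Pomerance, and the paper's own analysis later sidesteps it in favour of Hildebrand's theorem. So the only meaningful comparison is with the original CEP argument, which is a single uniform elementary estimate, whereas you try to glue two asymptotic regimes together. The part of your sketch that extracts the expansion $-\log\rho(u)/(u\log u)=1+\mathcal X(u)-\mathcal Y(u)+\mathcal X(u)\mathcal Y(u)-\mathcal Y(u)^2+O(\mathcal X(u)^2\mathcal Y(u))$ from de Bruijn's saddle-point formula is correct (it is essentially the computation the paper itself carries out in Section~3), and in the range where Hildebrand's theorem applies, the transfer to $\Psi$ plus the compactness/absolute-constant bookkeeping is indeed routine.

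The genuine gap is the complementary range, where your two tools do not meet. Hildebrand requires $u\le\exp\bigl((\log Y)^{3/5-\varepsilon}\bigr)$, so below your cut the only weapon left is $\Psi(X,Y)\ge\binom{\pi(Y)+\lfloor u\rfloor}{\lfloor u\rfloor}$, and this is too lossy as soon as $Y$ exceeds a fixed power of $u$. Concretely, take $Y=(\log X)^2$, so $u=\log X/(2\log\log X)$: Hildebrand is far out of range there, the claimed inequality amounts to $\log\Psi(X,Y)\ge\tfrac12\log X+(1+\log 2+o(1))\,u$, the admissible slack coming from the $O(\mathcal X(u)^2\mathcal Y(u))$ term is only $O\bigl(u(\log\log u)^2/\log u\bigr)=o(u)$, yet $\log\binom{\pi(Y)+\lfloor u\rfloor}{\lfloor u\rfloor}\le\tfrac12\log X+(1+o(1))\,u$. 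So your elementary count misses the target by a factor $\exp\bigl((\log 2+o(1))u\bigr)$; the bound is saved in reality only because genuinely smooth numbers in this range use many primes far below $Y$, a margin of size $e^{\Theta(u)}$ that neither of your tools detects. The same failure occurs throughout the band, roughly $(\log X)^{1+\delta}\le Y\le\exp\bigl((\log\log X)^{5/3}\bigr)$, and it is exactly the band where the secondary terms $-\mathcal Y(u)+\mathcal X(u)\mathcal Y(u)-\mathcal Y(u)^2$ of the statement are at stake, so the loss cannot be absorbed into the error term. Hence the sentence ``a direct computation shows this beats $X\exp(-u\log u\cdot p(u))$ throughout that range'' is where the proposal breaks down: it is false near the upper end of that range, and since the statement is asserted for all $u\ge3$, this band cannot be discarded. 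Producing a sharp lower bound for $\Psi$ in that transition region, uniformly and with the stated secondary terms, is precisely the content of the CEP theorem; it requires their dedicated elementary analysis (or an equally precise substitute), not a combination of Hildebrand's asymptotic with the crude multiset count. The remaining issues you flag ($Y<2$, bounded $u$, absolute constants) are indeed just bookkeeping.
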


In fact, it turns out that Theorem \ref{th:CEP} is at the same time too strong
and too weak for our purposes. On the one hand, it is too bad that the
asymptotic expansion
of the right-hand side of the inequality stops at some point. On the other hand it is a really
strong result since it is true for basically all $x,u$ without any restriction.
But in the NFS context, we know the magnitudes of $x,u$: We do not need a
smoothness result that would be unconditionally true. A better approximation of
the smoothness probability in a narrower range would suit us.  Hildebrand
proves such a result.

\begin{theorem}[Hildebrand~\cite{Hildebrand}]
\label{th:rho_proba}
Let $\varepsilon>0$. 
For
  $1 \leq u \leq \exp((\log y)^{3/5-\varepsilon})$ and $x=y^u$, we
    have
  \[\frac{\Psi(x,y)}x = \rho(u)\left(1+O\left(\frac{\log(u+1)}{\log y}\right)\right),\]
where $\rho$ is the Dickman--de Bruijn function.
\end{theorem}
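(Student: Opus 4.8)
The plan is to compare $\Psi(x,y)$ with $x\rho(u)$ through the functional equations each of them satisfies --- which turn out to be essentially the same equation --- and then to bound the discrepancy by induction on $\lfloor u\rfloor$. The arithmetic side rests on the identity obtained by writing $\log n=\sum_{d\mid n}\Lambda(d)$ and summing over the $y$-smooth integers $n\le x$:
\[\Psi(x,y)\log x=\int_{1}^{x}\frac{\Psi(t,y)}{t}\,dt+\sum_{p\le y}\sum_{m\ge1}(\log p)\,\Psi(x/p^{m},y),\]
while the analytic side rests on Dickman's differential--difference equation $u\rho'(u)=-\rho(u-1)$, equivalently the renewal identity $u\rho(u)=\int_{u-1}^{u}\rho(v)\,dv$. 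After the substitution $t=y^{v}$, the latter says precisely that $t\mapsto t\,\rho(\log t/\log y)$ obeys the same relation as $\Psi(t,y)$, with the prime sum replaced by $\int_{1}^{y}\Psi(x/t,y)\,dt/\log t$.

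First I would pass from the prime sum to the smooth integral: partial summation against $\theta(t)=\sum_{p\le t}\log p$ and the prime number theorem in Vinogradov--Korobov form, $\theta(t)=t+O\bigl(t\exp(-c(\log t)^{3/5}(\mlog{2} t)^{-1/5})\bigr)$, replace $\sum_{p\le y}(\log p)\Psi(x/p,y)$ by $\int_{2}^{y}\Psi(x/t,y)\,dt$ up to a controlled error, the prime-power terms with $m\ge2$ being negligible. Next I would substitute the Ansatz $\Psi(t,y)=t\,\rho(\log t/\log y)\bigl(1+E(\log t/\log y)\bigr)$ into both sides; using the renewal identity and the classical bounds for $\rho$ --- notably $\rho(v-s)/\rho(v)=\exp(O(s\log(v+1)))$ for $0\le s\le1$, which localizes the relevant integrals near their upper endpoint --- the terms not involving $E$ match up to relative error $O(1/\log x)$, and dividing through by $x\rho(u)\log y$ leaves an approximate recurrence
\[E(u)=\frac{1}{u\,\rho(u)}\int_{u-1}^{u}\rho(v)\,E(v)\,dv+O\!\left(\frac{1}{u\log y}\right),\]
the $O$-term absorbing in particular the image of the Vinogradov--Korobov error. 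Since $\tfrac{1}{u\rho(u)}\int_{u-1}^{u}\rho(v)\,dv=1$, the kernel is a probability density on $[u-1,u]$ which --- $\rho$ being decreasing --- is concentrated in a window of width $O(1/\log u)$ near $v=u-1$, so the integral equals $E(u-1)$ up to lower-order terms. Finally I would run the induction on $\lfloor u\rfloor$: the base range $0\le u\le1$ is immediate since $\Psi(x,y)=\lfloor x\rfloor$, and in the inductive step the recurrence reads $E(u)=E(u-1)+O(1/(u\log y))$, which telescopes to $E(u)=O\bigl(\tfrac{1}{\log y}\sum_{1\le j\le u}\tfrac{1}{j}\bigr)=O(\log(u+1)/\log y)$, a Gronwall-type argument on each unit interval $[k,k+1]$ taking care of the contribution of those $v$ with $\lfloor v\rfloor=\lfloor u\rfloor$.

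The step I expect to be the genuine obstacle --- and the one that fixes the shape of the statement --- is the uniform control of the error committed when $\sum_{p\le y}(\log p)\Psi(x/p,y)$ is replaced by $\int_{2}^{y}\Psi(x/t,y)\,dt$. The binding constraint sits at the endpoint $t\approx y$: there $\theta(y)-y$ can be as large as $y\exp(-c(\log y)^{3/5}(\mlog{2} y)^{-1/5})$ while $\Psi(x/y,y)\asymp(x/y)\rho(u-1)$, i.e.\ larger than $(x/y)\rho(u)$ by a factor of order $u$, and the product of these two must stay $O(x\rho(u))$ lest it spoil the $O(1/(u\log y))$ in the recurrence. This forces $u\exp(-c(\log y)^{3/5}(\mlog{2} y)^{-1/5})=O(1)$, that is $u\le\exp((\log y)^{3/5-\varepsilon})$, and it is this single requirement that pins down both the admissible range and the exponent $3/5$, inherited from the Vinogradov--Korobov zero-free region for $\zeta$. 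Everything else is technical but essentially mechanical: the $\rho$-estimates are classical, and the final error $O(\log(u+1)/\log y)$ is nothing but the harmonic sum $\sum_{j\le u}1/j$ --- produced by the factor $1/\log x=1/(u\log y)$ standing in front of the basic identity --- accumulated over the $\lfloor u\rfloor$ steps of the induction.
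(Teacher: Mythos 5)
The paper does not prove this statement at all: Theorem~\ref{th:rho_proba} is quoted from Hildebrand~\cite{Hildebrand} and used as a black box, so there is no internal proof to compare against. What you have written is, in outline, a reconstruction of Hildebrand's original argument: the identity $\Psi(x,y)\log x=\int_1^x\Psi(t,y)\,t^{-1}\mathrm dt+\sum_{p\le y,\,m\ge1}(\log p)\,\Psi(x/p^m,y)$ coming from $\log n=\sum_{d\mid n}\Lambda(d)$, the renewal equation $u\rho(u)=\int_{u-1}^u\rho(v)\,\mathrm dv$ as its continuous counterpart, the prime number theorem with Vinogradov--Korobov error to replace the prime sum by $\int_2^y\Psi(x/t,y)\,\mathrm dt$, and an induction on $u$ whose per-step error $O(1/(u\log y))$ accumulates to the harmonic sum $O(\log(u+1)/\log y)$. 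Your diagnosis of what fixes the range is also correct: at the endpoint $t\approx y$ the error $\theta(y)-y$ multiplies $\Psi(x/y,y)\asymp(x/y)\rho(u-1)$, which exceeds $(x/y)\rho(u)$ by roughly a factor $u$, and keeping this below $x\rho(u)$ is exactly what imposes $u\le\exp((\log y)^{3/5-\varepsilon})$.

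The step you call ``essentially mechanical'' is, however, where the sketch does not yet close. The kernel $\rho(v)/(u\rho(u))$ on $[u-1,u]$ has total mass $1$, and when $u$ is just below $k+1$ almost all of that mass sits inside the current interval $[k,u]$; so the reduction ``the integral equals $E(u-1)$ up to lower order'' is not available there, and the interval-by-interval Gronwall you invoke costs a multiplicative constant $>1$ per unit interval (the bound $F\le M_{k-1}+\epsilon/(1-m)$ degenerates as the self-referential mass $m\to1$, and the Volterra exponential costs a factor per interval). Since $u$ may be as large as $\exp((\log y)^{3/5-\varepsilon})$, any per-interval loss compounds to something vastly larger than the target $O(\log(u+1)/\log y)$; the iteration has to be organized globally, so that the only quantity that accumulates over the whole range is the harmonic sum of the errors, with no degradation of the constant --- this is the genuine content of Hildebrand's induction and cannot be waved through. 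Two smaller slips: the continuous analogue of the prime sum is $\int_1^y (x/t)\,\rho\bigl(\log(x/t)/\log y\bigr)\mathrm dt$, not $\int_1^y\Psi(x/t,y)\,\mathrm dt/\log t$ (your later treatment via $\theta$ is the correct one); and the first inductive range $1\le u\le2$ already requires Mertens-type estimates with error $O(1/\log y)$, so it is not quite ``immediate''. With those repairs your outline is faithful to the published proof, but as it stands the induction step is a real gap rather than a formality.
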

Under the Riemann
hypothesis this result even holds in a wider range~\cite[p. 290]{Hildebrand}. 
In the NFS context we are in
the appropriate range to apply Theorem~\ref{th:rho_proba}. Indeed, based
on Proposition~\ref{prop:first_term}, we expect to use
Theorem~\ref{th:rho_proba} in a context where $\log
y = b = \Theta(\nu^{1/3} (\log \nu)^{2/3})$ and $u =
\Theta((\nu/\log\nu)^{1/3})$:
$u$ is polynomial in $\log y$, while the bound in
Theorem~\ref{th:rho_proba} is subexponential.

Asymptotically, the result of Theorem~\ref{th:rho_proba} is therefore
very precise. This leads us to study more precisely the expansion of
$\rho$.

\subsection{Asymptotic expansion of the Dickman--de Bruijn function} In~\cite{rhodev}, De Bruijn proves the following formula :
\[\rho(u) \underset{u \rightarrow + \infty}{\sim} \frac{e^{\gamma}}{\sqrt{2 \pi
u}} \times \mathrm{exp} \left(- \int_0^{\xi} \frac{se^s -e^s+1}{s}
\mathrm ds \right)\] for all $u >1$ and $\xi = (e^u-1)/u$.
Let $\eta=(e^s-1)/s$, so that $s = \log (1+s \eta)$. By substitution we have:
\[\int_0^{\xi} \frac{se^s -e^s+1}{s} \mathrm ds = \int_1^u s \mathrm d \eta.\]

Let us study the integral on the right-hand side. First, we need to know more about $s$.

\medskip

\begin{proposition}
\label{prop:dev_s}
The function $\eta \rightarrow
    s(\eta)/\log \eta$ is in $\cC$.
\end{proposition}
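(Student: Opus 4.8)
\emph{Proof strategy.} Recall that $s=s(\eta)$ is defined by $e^{s}=1+s\eta$, equivalently $s=\log(1+s\eta)$, and that $s\to+\infty$ as $\eta\to+\infty$. The plan has four steps: (i) pin down the crude size of $s$; (ii) recast the defining relation as a small perturbation of an explicit fixed-point equation in $\cX$ and $\cY$; (iii) solve the corresponding \emph{formal} equation in $\bR[[X,Y]]$; (iv) transfer the formal solution into a genuine asymptotic expansion by induction on the truncation order. For (i): from $1+s\eta<e^{s}$ one gets $s>\log\eta$, and from $e^{s}<2s\eta$ (valid once $s\geq1$) one gets $s<\log\eta+\log s+O(1)$; a short bootstrap then yields $\log\eta<s\leq\log\eta+\log_{2}\eta+O(1)$, so that $t:=s/\log\eta=1+O(\cX(\eta))\to1$. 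For (ii): writing $s=\log(1+s\eta)=\log s+\log\eta+\log(1+1/(s\eta))$, dividing by $\log\eta$ and using $\log s=\log t+\log_{2}\eta$ gives
\[
  t \;=\; 1+\cX(\eta)+\cY(\eta)\,\log t+\rho(\eta),
  \qquad
  \rho(\eta):=\cY(\eta)\,\log(1+1/(s\eta)).
\]
Since $s\eta=e^{s}-1\geq\eta$, we have $0\leq\rho(\eta)\leq 1/(\eta\log\eta)$, which is $o(\cY(\eta)^{n})$ for every $n$; thus $\rho$ is a ``flat'' function contributing nothing to any asymptotic expansion, exactly as for the function $1/\eta$ discussed after Definition~\ref{def:class-C}.

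For step (iii), I claim there is a unique $\bfS\in\bR[[X,Y]]$ with constant term $1$ solving $\bfS=1+X+Y\log\bfS$, where $\log\bfS:=\log(1+(\bfS-1))$ is the formal logarithm, well defined since $\bfS-1$ has no constant term. Writing $\bfS=1+\bfU$, the operator $\Phi:\bfU\mapsto X+Y\log(1+\bfU)$ maps the ideal $(X,Y)$ into itself and is a strict contraction for the $(X,Y)$-adic metric: if $\bfU\equiv\bfU'$ modulo $(X,Y)^{k}$ then $\log(1+\bfU)\equiv\log(1+\bfU')$ modulo $(X,Y)^{k}$, and the explicit factor $Y$ pushes $\Phi(\bfU)-\Phi(\bfU')$ into $(X,Y)^{k+1}$. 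Completeness of $\bR[[X,Y]]$ then provides the unique fixed point; note that the lowest-degree term of the resulting $\bfU$ is $X$, and $\bfS^{(n)}=1+\bfU^{(n)}$.

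For step (iv), I prove by induction on $n\geq0$ that $t(\eta)=\bfS^{(n)}(\cX(\eta),\cY(\eta))+o(\cY(\eta)^{n})$, which is precisely the statement $t=s/\log\eta\in\cC$ with associated series $\bfS$. The case $n=0$ is the limit $t\to1=\bfS(0,0)$ from step (i). For the inductive step, assume the order-$n$ estimate, equivalently $t-1=\bfU^{(n)}(\cX,\cY)+o(\cY^{n})$. Because $t-1\to0$ one may write $\log t=\sum_{k=1}^{n}\frac{(-1)^{k-1}}{k}(t-1)^{k}+O(|t-1|^{n+1})$, and $|t-1|^{n+1}=O(\cX^{n+1})=o(\cY^{n})$; substituting the inductive estimate into each power, and using the elementary dichotomy that a monomial $\cX^{i}\cY^{j}$ is $o(\cY^{n})$ exactly when $i+j\geq n+1$ (together with $(\bfU^{(n)})^{k}\equiv\bfU^{k}$ modulo $(X,Y)^{n+1}$), one gets $\log t=(\log\bfS)^{(n)}(\cX,\cY)+o(\cY^{n})$. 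Feeding this, and the flatness of $\rho$, back into the identity of step (ii), and using the formal relation $\bfS=1+X+Y\log\bfS$, gives $t=\bfS^{(n+1)}(\cX,\cY)+o(\cY^{n+1})$, closing the induction.

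The delicate point is the bookkeeping in step (iv): one must check that composing the convergent series $\log(1+z)$ with the function $t-1$ and then truncating reproduces \emph{exactly} the truncation of the formal series $\log\bfS$, with each error term genuinely of order $o(\cY^{n})$ and not merely $O(\cY^{n})$. This is where the precise dichotomy ``$\cX^{i}\cY^{j}=o(\cY^{n})\iff i+j\geq n+1$'' and the absence of a constant term in $\bfU$ get used repeatedly. I would deliberately avoid the shortcut of invoking the closure of $\cC$ under $f\mapsto\log f$ stated after Definition~\ref{def:class-C}, since membership of $t$ in $\cC$ is exactly what is being proved; the induction uses only finite, order-$n$ information about $t$, so it is not circular.
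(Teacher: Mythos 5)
Your proof is correct and follows essentially the same route as the paper's: the crude bounds $\log\eta<s\leq\log\eta+\log_2\eta+O(1)$, the rewriting $s=\log s+\log\eta+\log(1+1/(s\eta))$ with the last term absorbed as a flat error, and an induction on the truncation order, with the paper's recursively truncated polynomials $P_n$ merely repackaged as the unique $(X,Y)$-adic fixed point of $\bfS=1+X+Y\log\bfS$. One cosmetic slip: since $e^{s}=1+s\eta$ holds exactly, the inequality ``$1+s\eta<e^{s}$'' should instead read $e^{s}=1+s\eta>\eta$ (valid once $s\geq 1$), which still yields $s>\log\eta$.
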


\begin{proof}
We prove by induction on $n$ that there exists $P_n \in \bR[X,Y]$ such
    that, as $\eta \rightarrow +\infty$, we have $s = \log \eta\cdot
    (P_n(\cX, \cY)+ o(\cY^n))$.
    First we show that $s = (\log
    \eta)(1+o(1))$ when $\eta \rightarrow +\infty$, which is to say
    $P_0=1$.
    Indeed $\varphi: s \longmapsto (e^s-1)/s$ is bijective on $\R_{>0}$
    and
    strictly increasing, and so is $\varphi^{-1}$. Since $\varphi(\log
    \eta) = (\eta-1)/\log \eta <  \eta = \varphi(s)$, we have $\log \eta
    <s$. Let now $\varepsilon >0$.
    When $\eta$ is large enough, we have
    $\eta^{1+\varepsilon} > (1+\varepsilon)\eta\log \eta +1$.
    This leads to $\varphi(s) < \varphi((1+\varepsilon)\log \eta)$,
    hence to  $s < (1+\varepsilon)\log \eta$ and proves the base case.

We now proceed with the induction step. Since $s = \log(1+s
    \eta)$,
    we may write
    \begin{align*}
        s
        &= \log s + \log\eta + \log(1+1/(s\eta))\\
        &= \log\eta\cdot\left(1+\log s/\log\eta+o(1/\eta)\right)\\
        &=\log\eta\cdot\left(
            1
            +\cX
            +\cY\cdot
                \log 
                P_n(\cX,\cY)
            +o(\cY^{n+1})\right),
    \end{align*}
    where the last expression omits some terms that are swallowed by $o(
    \cY^{n+1})$.
    Since the constant coefficient of $P_n$ is~$1$,
    an expansion of the inner logarithm above to $n$ terms
    gives the desired result. More precisely, it is easy to verify that
    $P_{n+1}$ is the truncation to
    total degree at most $n+1$ of
    \[1+X-Y\sum_{i=1}^{n}(-1)^i\frac{(P_n-1)^i}{i}.\]
\end{proof}

In fact, a more explicit version of the series associated to $s(\eta)/\log\eta$
can be computed:
\begin{proposition}
\label{prop:s_formula} 
  Letting $\bf P$ denote the series associated to $s(\eta)/\log\eta$, we have
    \[\bfP(X,Y)=
    1+X+Y\cdot
    \sum_{i=1}^{+\infty}\sum_{j=1}^i\frac{S(i,i-j+1)}{j!}X^jY^{i-j}\]
    where the $S(i,j)$ are signed Stirling numbers of the first
    kind.
\end{proposition}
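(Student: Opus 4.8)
The plan is to turn the recursion underlying Proposition~\ref{prop:dev_s} into a closed functional equation for $\bfP$, and then to invert that equation by Lagrange inversion; the signed Stirling numbers of the first kind will appear through the classical generating function $(\log(1+Z))^{k}/k! = \sum_{n\ge k} S(n,k)\,Z^{n}/n!$. First I would record the functional equation. The proof of Proposition~\ref{prop:dev_s} exhibits the truncations $\bfP^{(n)}$ as obeying the recursion: $\bfP^{(n+1)}$ is the truncation to total degree $\le n+1$ of $1 + X - Y\sum_{i=1}^{n}(-1)^{i}(\bfP^{(n)}-1)^{i}/i$. Since $\bfP-1$ has no constant term, the partial sums $-\sum_{i=1}^{n}(-1)^{i}(\bfP-1)^{i}/i$ converge $(X,Y)$-adically to $\log(1+(\bfP-1)) = \log\bfP$, so letting $n\to\infty$ yields
\[\bfP = 1 + X + Y\log\bfP \qquad\text{in } \bR[[X,Y]].\]
(Alternatively one can derive this directly from $s = \log(1+s\eta) = \log\eta + \log s + \log(1+1/(s\eta))$, using $\log s/\log\eta = \cX + \cY\log(s/\log\eta)$, the fact that $\log(1+1/(s\eta)) = O(1/\eta)$ is $o(\cY^{n})$ for every $n$, and the stability of $\cC$ under $\log$.)

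Next I would set $T = \log\bfP$ and $Z = \bfP - 1$, so the functional equation becomes $Z = X + Y\log(1+Z)$, i.e.\ $X = Z - Y\log(1+Z) = Z\bigl(1 - Y\,\tfrac{\log(1+Z)}{Z}\bigr)$. As $\tfrac{\log(1+Z)}{Z}$ has constant term $1$, the series $\phi(Z) := \bigl(1 - Y\,\tfrac{\log(1+Z)}{Z}\bigr)^{-1}$ is a well-defined element of $\bR[[Y]][[Z]]$ with $\phi(0) = (1-Y)^{-1}$, a unit of $\bR[[Y]]$, and $Z = X\,\phi(Z)$. Lagrange inversion over the coefficient ring $\bR[[Y]]$, applied to $H(Z) = \log(1+Z) = T$, then gives for every $j\ge1$
\[[X^{j}]\,T \;=\; \frac1j\,[Z^{\,j-1}]\!\left(\frac{\phi(Z)^{j}}{1+Z}\right).\]
Expanding $\phi(Z)^{j} = \sum_{r\ge0}\binom{r+j-1}{j-1}Y^{r}\bigl(\tfrac{\log(1+Z)}{Z}\bigr)^{r}$ by the negative binomial theorem and extracting the coefficient of $Y^{r}$, this becomes
\[[X^{j}Y^{r}]\,T \;=\; \frac1j\binom{r+j-1}{j-1}\,[Z^{\,j-1+r}]\,\frac{(\log(1+Z))^{r}}{1+Z}.\]

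It then remains to evaluate the last coefficient and to match indices. Since $(\log(1+Z))^{r}/(1+Z)$ is the derivative of $(\log(1+Z))^{r+1}/(r+1)$ and $(\log(1+Z))^{r+1} = (r+1)!\sum_{n\ge r+1}S(n,r+1)Z^{n}/n!$, differentiating term by term gives $(\log(1+Z))^{r}/(1+Z) = r!\sum_{n\ge1}S(n,r+1)Z^{n-1}/(n-1)!$, so that $[Z^{\,j-1+r}]\,(\log(1+Z))^{r}/(1+Z) = S(j+r,r+1)\cdot r!/(j+r-1)!$. Substituting and using the elementary simplification $(1/j)\binom{r+j-1}{j-1}\,r!/(j+r-1)! = 1/j!$ gives $[X^{j}Y^{r}]\,T = S(j+r,r+1)/j!$. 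Finally $\bfP = 1 + X + YT$, and reindexing by $i = j+r$ (so that $r = i-j$ and $1\le j\le i$) turns this into the asserted formula.

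The only genuinely delicate point is the first step: one has to be sure that $\bfP$ --- which is defined a priori only via the asymptotic expansion of $s(\eta)/\log\eta$ --- really satisfies the functional equation as an identity of formal power series, and this is exactly where the recursion from Proposition~\ref{prop:dev_s} (or a careful direct argument exploiting the stability of $\cC$ under $\log$ and $x\mapsto 1/x$) is indispensable. Everything after that is formal power-series bookkeeping: checking that Lagrange inversion is legitimate over $\bR[[Y]]$, invoking the Stirling exponential generating function, and one short manipulation of binomials and factorials.
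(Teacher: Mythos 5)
Your proof is correct, but it follows a genuinely different route from the paper's. You derive the formal functional equation $\bfP = 1+X+Y\log\bfP$ (from the recursion in the proof of Proposition~\ref{prop:dev_s}, or directly from $s=\log(1+s\eta)$) and then solve it in closed form by Lagrange--B\"urmann inversion over the coefficient ring $\bR[[Y]]$, with the Stirling numbers entering through the exponential generating function of $(\log(1+Z))^{k}/k!$; the index bookkeeping $[X^{j}Y^{r}]\log\bfP=S(j+r,r+1)/j!$ checks out, as does the reindexing $i=j+r$. The paper instead quotes Comtet's asymptotic expansion of the functional inverse $g$ of $x\mapsto e^{x}/x$ (which is where the same Stirling-number series comes from) and reduces to it analytically: since $s$ is the inverse of $x\mapsto e^{x}/x-1/x$, a mean-value-theorem argument shows $s(y)-g(y)\sim(y\log y)^{-1}$, which is $o((\log y)^{-n})$ for every $n$, so the two expansions coincide. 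Your approach is self-contained (no external citation, and it rederives Comtet's expansion as a by-product), at the cost of having to justify that the asymptotic series $\bfP$ really satisfies the functional equation as an identity in $\bR[[X,Y]]$ --- which you rightly single out as the delicate point and settle via the recursion, implicitly using uniqueness of the associated series; note also that $Z=\bfP-1$ is divisible by $X$ (so Lagrange inversion applies) since $X=Z\cdot u$ with $u$ a unit of $\bR[[X,Y]]$. The paper's approach is shorter given the reference, but requires the analytic comparison between $s$ and $g$; both yield exactly the stated formula.
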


\begin{proof}
By \cite{Comtet70} (see also \cite[Sec.~5,
  Exercise~22]{comtet}), the reciprocal $g(y)$ of $f: x \mapsto e^x/x$
  has the following asymptotic series expansion at the neighborhood of
  $\infty$:
  \[\log y + \log_2 y + \sum\limits_{i=1}^{+ \infty} \sum\limits_{j=1}^i
  \frac{S(i,i-j+1)}{j!} \frac{(\log_2 y)^j}{(\log y)^i}.\]

\smallskip

  By definition, $s(y)$ is the reciprocal of the function $h:x \mapsto e^x/x -1/x$. We
  will show that $s(y)-g(y) \sim (y \log y)^{-1}$ as $y \rightarrow + \infty$. This
  will conclude our proof since it will ensure that for all $n \in \mathbb{Z}_{\geq
  0}$, 
  \[s(y) =  \log y + \log_2 y + \sum\limits_{i=1}^{n} \sum\limits_{j=1}^i
  \frac{S(i,i-j+1)}{j!} \frac{(\log_2 y)^j}{(\log y)^i} + o\left(\frac{1}{(\log
  y)^n}\right)\] as $y$ grows.
    
\smallskip

First, using the
  fact that $h(s(y)) = e^{s(y)}y^{-1}-s(y)^{-1} = y$, we get that $f(s(y)) - f(g(y)) =
  e^{s(y)}s(y)^{-1} - y = s(y)^{-1}$. By the mean value theorem, there exists $\theta_y$
  between $g(y)$ and $s(y)$ such that $s(y)^{-1} = f(s(y)) - f(g(y)) = f'(\theta_y)
  (s(y)-g(y))$. Since $s(y)=\log y+\log_2 y+o(1)$ and $g(y)=\log y+\log_2 y+o(1)$, we deduce that $\theta_y = \log y + \log_2 y + o(1)$ when $y
  \rightarrow +\infty$, hence $f'(\theta_y) \sim y$.
  Consequently, $s(y)-g(y) \sim (ys(y))^{-1}\sim (y\log y)^{-1}$.
\end{proof}

\begin{proposition} \label{prop:radius}
    The series
    \[\sum\limits_{i=1}^{+\infty} \sum\limits_{j=1}^i
    \frac{S(i,i-j+1)}{j!} \cX^j\cY^{i-j}\]
    converges uniformly for $\eta\in[176, +\infty[$.
\end{proposition}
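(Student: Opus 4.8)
The plan is to dominate the series of absolute values by a convergent geometric series and conclude with the Weierstrass $M$-test. The key structural remark is that on $[176,+\infty[$ one has $\log\eta>e$, so $\cX$ and $\cY$ are positive and strictly decreasing, and they satisfy the single relation $\cX=\cY\log(1/\cY)$ (since $\cX/\cY=\log_2\eta=\log(1/\cY)$). Writing $L=\cX/\cY$, the general term of the series equals $\frac{S(i,i-j+1)}{j!}\,L^{j}\cY^{\,i}$; since $1\le j\le i$, the function $v\mapsto v^{\,i}\bigl(\log(1/v)\bigr)^{j}$ is increasing on $(0,e^{-j/i}]\supseteq(0,1/\log 176]$, and $\eta\mapsto\cY(\eta)$ is a decreasing bijection from $[176,+\infty[$ onto $(0,\cY(176)]$, so each term of the series attains its maximum absolute value at $\eta=176$. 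Hence it suffices to prove
\[
\sum_{i\ge1}\sum_{j=1}^{i}\frac{|S(i,i-j+1)|}{j!}\,\cX(176)^{\,j}\,\cY(176)^{\,i-j}<+\infty,
\]
after which the $M$-test gives uniform (indeed absolute and uniform) convergence on $[176,+\infty[$.

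To estimate this sum, set $\cX_0=\cX(176)$, $\cY_0=\cY(176)$, $L_0=\cX_0/\cY_0$, and reorganize it — all terms being nonnegative — as $\sum_{j\ge1}\frac{L_0^{\,j}}{j!}\,\Sigma_j(\cY_0)$ with $\Sigma_j(z)=\sum_{i\ge j}|S(i,i-j+1)|\,z^{\,i}$. The main ingredient is the rational form of the generating function of the near-diagonal Stirling numbers of the first kind. Since $|S(i,i-j+1)|=e_{j-1}(1,2,\dots,i-1)$ is a polynomial in $i$ of degree $2(j-1)$ with leading coefficient $1/(2^{j-1}(j-1)!)$, one has
\[
\Sigma_j(z)=\frac{z^{\,j}\,B_{j-1}(z)}{(1-z)^{2j-1}},
\]
where $B_{j-1}$ is a polynomial with nonnegative integer coefficients (its coefficients are second-order Eulerian numbers) and $B_{j-1}(1)=(2j-3)!!=\frac{(2j-2)!}{2^{j-1}(j-1)!}$ — the value $B_{j-1}(1)$ simply reading off the leading singular behavior of $\Sigma_j(z)$ as $z\to1^-$. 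Therefore, for $0<z<1$,
\[
\Sigma_j(z)\ \le\ \frac{(2j-3)!!\;z^{\,j}}{(1-z)^{2j-1}}.
\]

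Plugging this in, using $L_0\cY_0=\cX_0$ and $\frac{(2j-2)!}{j!\,(j-1)!}=C_{j-1}\le 4^{\,j-1}$ (the $(j-1)$-st Catalan number), one gets
\[
\sum_{i,j}\frac{|S(i,i-j+1)|}{j!}\,\cX_0^{\,j}\,\cY_0^{\,i-j}\ \le\ \frac{1-\cY_0}{2}\sum_{j\ge1}\left(\frac{2\cX_0}{(1-\cY_0)^2}\right)^{\!j}.
\]
This geometric series converges exactly when $2\cX_0<(1-\cY_0)^2$, i.e. $\frac{2\log_2 176}{\log 176}<\bigl(1-\frac{1}{\log 176}\bigr)^2$, which is a one-line numerical check ($\approx 0.6355<0.6506$). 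Combined with the first paragraph, this proves the proposition.

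The delicate point is precisely the sharpness required in the inequality $2\cX_0<(1-\cY_0)^2$: it holds with only about a $2\%$ margin, so the bound on $\Sigma_j$ cannot be wasteful. In particular, replacing $(2j-3)!!=\frac{(2j-2)!}{2^{j-1}(j-1)!}$ by a cruder estimate such as $\frac{(2j-2)!}{(j-1)!}$ would cost a factor $2^{j-1}$ per summand and break the argument; this is why one needs the exact rational form of the Stirling generating function (the nonnegativity of the coefficients of $B_{j-1}$, so that $B_{j-1}(\cY_0)\le B_{j-1}(1)$, together with the exact value $(2j-3)!!$) rather than a naive term-by-term bound on $|S(i,i-j+1)|$. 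Everything else — the reduction to $\eta=176$, the Catalan estimate, and the summation — is routine.
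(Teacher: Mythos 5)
Your proof is correct, but it takes a genuinely different route from the paper's. The paper collapses the two variables via $\cY\leq\cX$ (valid for $\eta>e^e$), so the double series is dominated by a univariate power series $\sum_i c_i\cX^i$ with $c_i=\sum_j |S(i,i-j+1)|/j!$; it then invokes the known asymptotics of the row sums $a_i=i!\,c_i'$ (OEIS A138013), which involve the Lambert value $W(-1,-e^{-2})$, to identify the radius of convergence $-1/W(-1,-e^{-2})\approx 0.3178$, and concludes because $\cX(\eta)$ stays below that radius for $\eta\geq 176$. You instead keep $\cX$ and $\cY$ separate: after reducing to $\eta=176$ by your monotonicity argument (each term of the series of absolute values is maximal at $\eta=176$, since $\cY(176)<e^{-1}\leq e^{-j/i}$, so the Weierstrass test applies), you sum exactly over $i$ using the rational generating function of the near-diagonal unsigned Stirling numbers --- numerator with nonnegative coefficients (second-order Eulerian numbers) and value $(2j-3)!!$ at $z=1$ --- then over $j$ via the Catalan bound, ending with a geometric series of ratio $2\cX(176)/(1-\cY(176))^2\approx 0.977<1$. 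Your identification $|S(i,i-j+1)|=e_{j-1}(1,\dots,i-1)$, the degree and leading coefficient in $i$, and the evaluation $B_{j-1}(1)=(2j-3)!!$ read off the singularity at $z=1$ all check out, as does the numerical inequality $2\cX(176)<(1-\cY(176))^2$ ($0.6355<0.6506$), so I see no gap; the only imported fact is the standard Gessel--Stanley identity giving the nonnegativity of the numerator coefficients, which you cite correctly. What each approach buys: yours is more self-contained (no Lambert-$W$ asymptotics, only a classical identity plus a one-line numerical check) and gives absolute as well as uniform convergence, but it closes only barely at $\eta=176$ and, as you note, would break under the paper's lossy substitution $\cY\leq\cX$; the paper's one-variable argument is shorter given the cited asymptotics and makes transparent why $176$ is essentially the natural threshold, since $\cX(176)$ sits just below the radius $-1/W(-1,-e^{-2})$ of the majorant series.
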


\begin{proof}
    First, for all $k \in \mathbb{Z}_{\geq 0}$ and all $\eta >
    e^e$ we have $\cY\leq\cX$,
    so that
    
  \[\sum_{i=1}^k \sum_{j=1}^i \frac{s(i,i-j+1)}{j!} \cX^j\cY^{i-j} \leq
    \sum_{i=1}^k \sum_{j=1}^i \frac{s(i,i-j+1)}{j!} \cX^i =  \sum_{i=1}^k
    c_i \cX^i\]
where $c_i = \sum\limits_{j=1}^i \frac{s(i,i-j+1)}{j!}$.
Let $a_i = i! \sum\limits_{j=1}^i
    \left|\frac{s(i,i-j+1)}{j!} \right|$. According to the asymptotic formula for
    the sequence \textsf{A138013} of the
    OEIS\footnote{\url{https://oeis.org/A138013}}, we have, as $i \rightarrow +
    \infty$: \[a_i \sim \sqrt{-1-W(-1,-e^{-2})} \times
    (-W(-1,-e^{-2}))^i \times i^{i-1}/e^i\]
    where $W$ is the Lambert $W$ function, so that
    \[\frac{a_i/i!}{a_{i+1}/(i+1)!}\underset{i\rightarrow\infty}{\longrightarrow}
    \frac1{-W(-1,-e^{-2})}.\]
    
    This shows that the power series with coefficients $a_i/i!$ has a
    finite radius of convergence equal to $-1/W(-1,-e^{-2})$. Since
    $|c_i| \leq a_i/i!$, the power series $\sum c_i X^i$ also has finite
    radius of convergence, which is at most as large as the former. Therefore
    the series converges uniformly for
    $\frac{\log\log \eta}{\log \eta}\leq -1/W(-1,-e^{-2})$, and this inequality
    is satisfied for $\eta \geq 176$.

\end{proof}

We now shift gears and study the asymptotic behavior of $\int_1^u s
\mathrm d
\eta$ as $u$ grows. The first step towards this goal lies in the following
proposition.

\begin{proposition}
\label{prop:dev_integral_s}
The function $u \rightarrow\frac{1}{u \log u} \cdot \int_e^u s \mathrm{d}\eta$
  is in $\cC$, and the coefficients of its associated series $\bfQ$ can be
  computed explicitly.

\end{proposition}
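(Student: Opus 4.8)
\emph{Proof idea.} The plan is to produce a closed‑form antiderivative of $s$ and thereby reduce everything to the already‑established expansion of $s(\eta)/\log\eta$ (Propositions~\ref{prop:dev_s} and~\ref{prop:s_formula}). Substituting $\sigma=s(\eta)$, so that $\eta=(e^\sigma-1)/\sigma$ by the relation $e^s=1+s\eta$, turns $\int_e^u s\,\mathrm d\eta$ into $\int_{s(e)}^{s(u)}\bigl(e^\sigma-\tfrac{e^\sigma-1}{\sigma}\bigr)\,\mathrm d\sigma$, an antiderivative of which is $e^\sigma-\operatorname{Ei}(\sigma)+\log\sigma$, where $\operatorname{Ei}$ is the exponential integral. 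Since $s(e)>0$ there is no singularity, and using once more the exact identity $e^{s(u)}=1+u\,s(u)$ one obtains $\int_e^u s\,\mathrm d\eta=u\,s(u)-\operatorname{Ei}(s(u))+\log s(u)+c$ for an explicit constant $c$, hence
\[\frac{1}{u\log u}\int_e^u s\,\mathrm d\eta=\frac{s(u)}{\log u}-\frac{\operatorname{Ei}(s(u))}{u\log u}+\frac{\log s(u)}{u\log u}+O\!\left(\frac{1}{u\log u}\right).\]

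The next step is to show that each of the three genuine terms lies in $\cC$ and to identify its associated series. The term $s(u)/\log u$ is in $\cC$ with series $\bfP$ by Proposition~\ref{prop:dev_s}; since $\bfP(0,0)=1\neq0$, the reciprocal $\log u/s(u)$ is in $\cC$ with series $1/\bfP$, and consequently $1/s(u)^k$ is in $\cC$ with series $1/\bfP^k$ for every $k$. The terms $\log s(u)/(u\log u)$ and $O(1/(u\log u))$ are $O(\log\log u/(u\log u))$, hence $o(\cY^n)$ for all $n$, so they are in $\cC$ with zero series. The delicate term is $\operatorname{Ei}(s(u))/(u\log u)$: here I would feed in the asymptotic expansion $\operatorname{Ei}(x)=\tfrac{e^x}{x}\sum_{k=0}^{N-1}\tfrac{k!}{x^k}+O\!\bigl(\tfrac{N!\,e^x}{x^{N+1}}\bigr)$ at $x=s(u)$, replace $e^{s(u)}$ by $1+u\,s(u)$, and bound the remainder using $s(u)\geq\log u$, which yields, for every $N$,
\[\frac{\operatorname{Ei}(s(u))}{u\log u}=\frac{1}{\log u}\sum_{k=0}^{N-1}\frac{k!}{s(u)^k}+O\!\left(\frac{1}{u(\log u)^2}\right)+o(\cY^N).\]
Each summand $\frac{1}{\log u}\cdot\frac{k!}{s(u)^k}$ is in $\cC$ with series $k!\,Y^{k+1}/\bfP^k$, so this term lies in $\cC$ with series $Y\sum_{k\geq0}k!\,(Y/\bfP)^k$. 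This is a legitimate element of $\bR[[X,Y]]$ because $Y/\bfP$ has vanishing constant term, so only finitely many $k$ contribute to any given total degree — which is exactly why the divergence of the numerical series $\sum k!$ does no harm once one truncates as in Definition~\ref{def:class-C}.

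Putting the three contributions together, $u\mapsto\frac{1}{u\log u}\int_e^u s\,\mathrm d\eta$ lies in $\cC$ with associated series
\[\bfQ(X,Y)=\bfP(X,Y)-Y\sum_{k\geq0}k!\left(\frac{Y}{\bfP(X,Y)}\right)^{\!k},\]
and, since the coefficients of $\bfP$ are explicitly given by Proposition~\ref{prop:s_formula}, those of $\bfQ$ can be computed degree by degree (for instance $\bfQ=1+X-Y+XY-Y^2+\cdots$).

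I expect the main obstacle to be the rigorous treatment of the term $\operatorname{Ei}(s(u))/(u\log u)$: one must keep the remainder of the $\operatorname{Ei}$ expansion within the $o(\cY^N)$ tolerance of Definition~\ref{def:class-C} for every $N$, convert the transcendental factor $e^{s(u)}$ into the tractable $1+u\,s(u)$, and bear in mind throughout that $\sum k!\,(Y/\bfP)^k$ is to be manipulated purely formally. A more pedestrian alternative, which avoids $\operatorname{Ei}$ altogether, is to substitute $s=\log\eta\cdot\bfP^{(n)}(\cX,\cY)+o((\log\eta)^{1-n})$ directly into $\int_e^u s\,\mathrm d\eta$ and integrate term by term, using the asymptotic expansions of $\int_e^u(\log\log\eta)^j(\log\eta)^p\,\mathrm d\eta$ obtained from the substitution $t=\log\eta$ followed by repeated integration by parts; this recovers the same series $\bfQ$, with its coefficients expressed instead through the iterated derivatives of $t\mapsto(\log t)^j t^p$.
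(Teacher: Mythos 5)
Your argument is correct, and it takes a genuinely different route from the paper. The paper never leaves the $\eta$-variable: it introduces the derivation $\Delta$ on $\bR[X,Y]$ that mimics $\eta\,\frac{\mathrm d}{\mathrm d\eta}$ acting on expressions in $(\cX,\cY)$, and obtains $\bfQ$ by repeated integration by parts of the truncated expansion of $s$, ending with $\bfQ=(1-Y)\bfP+Y(1+\Delta)^{-1}(1-Y^{-1})\Delta\bfP$ (this is essentially the ``pedestrian alternative'' you sketch in your last sentence). You instead undo the paper's own substitution, return to De Bruijn's integral in the $\sigma=s$ variable, write an exact antiderivative $e^\sigma-\operatorname{Ei}(\sigma)+\log\sigma$, eliminate the exponential via $e^{s(u)}=1+u\,s(u)$, and then import the classical divergent expansion of $\operatorname{Ei}$; the remainder control via $s(u)\geq\log u$ and the observation that $Y/\bfP$ has positive order (so $\sum_k k!\,(Y/\bfP)^k$ is a well-defined formal series and each truncation only sees finitely many $k$) are exactly the points that need care, and you handle them correctly. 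Your closed form
\[\bfQ=\bfP-Y\sum_{k\geq0}k!\left(\frac{Y}{\bfP}\right)^{k}\]
agrees with the paper's $\bfQ^{(3)}=1+X-Y+XY-Y^2-\tfrac{X^2Y}{2}+2XY^2-2Y^3$, so the two expressions coincide at least to that order. What each approach buys: yours yields a compact explicit formula for $\bfQ$ as a rational-type expression in $\bfP$ and makes the divergence of the expansion (the $k!$ growth) conceptually transparent, at the price of invoking a special function and its asymptotics; the paper's $\Delta$-operator argument stays entirely inside the elementary $\cC$-calculus, needs no identity like $e^s=1+s\eta$ beyond the expansion of $s$ itself, and is the form that the rest of the paper automates. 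One small bookkeeping remark: the constants in your $\operatorname{Ei}$ remainder depend on $N$, which is harmless here because Definition~\ref{def:class-C} quantifies over $n$ term by term, but it is worth saying explicitly.
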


\begin{proof}
We prove that for all $n \in \mathbb{Z}_{\geq 0}$, there is a polynomial $Q_n$ such that, as $u \rightarrow + \infty$:
   \[\int_e^u s \mathrm{d}\eta = u \log u \cdot \left(
   Q_n \left(\cX(u), \cY(u)) \right) + o \left( \cY(u)^n \right) \right),\]
   and that for all $i,j\geq 0, Q_i(X, Y) \equiv Q_j(X,Y)\bmod \langle X,
   Y\rangle^{\min(i,j)}$. We emphasize that our proof provides an explicit method to compute $Q_n$.

    Let us first notice that since we are looking for an asymptotic
    expansion of a divergent integral as $u\rightarrow\infty$, up to
    terms that also tend to infinity, we are
    free to choose the lower bound of the integral.

    Let $\Delta$ be the $\bR$-linear
    operator defined on $\bR[X,Y]$ by $\Delta 1=0$, $\Delta
    X=Y\cdot(Y-X)$, $\Delta Y=-Y^2$, and $\Delta(UV)=U\Delta V+(\Delta
    U)V$. One can check that:
    \[\forall T\in\bR[X,Y],\ \frac{\mathrm d}{\mathrm
    d\eta}T(\cX,\cY)=
    \frac1\eta \cdot (\Delta T)(\cX,\cY).\]
    
    Notice that $\Delta\bR[X,Y]\subset Y\bR[X,Y]$.
    
    We use the properties of $\Delta$ to prove an
    intermediate result.
    Let $T$ be an arbitrary bivariate polynomial in $\bR[X,Y]$. Using the above notation, repeated integration by
    parts yields:
    \begin{align*}
        \int_e^u T(\cX,\cY) \mathrm{d} \eta
        & = [\eta T(\cX,\cY) ]_e^u
    - \int_e^u 
        (\Delta T)(\cX,\cY)
    \mathrm{d} \eta
        \\
        &=\sum_{i=0}^{n-1}(-1)^i[\eta \cdot(\Delta^i T)(\cX,\cY)
        ]_e^u+(-1)^{n}\int_e^u (\Delta^{n} T)(\cX,\cY)\mathrm{d}
        \eta\\
        &=[\eta \cdot R(\cX,\cY)]_e^u+\int_e^u o(\cY^{n-1})\mathrm{d} \eta
    \end{align*}
    for $R=\sum_{i=0}^{n-1}(-1)^i\Delta^i T\in\bR[X,Y]$.
    Note that $R$ is a truncation of $(1+\Delta)^{-1}T$.

    This allows us to quickly conclude. Indeed, using
    Proposition~\ref{prop:dev_s}, for all $n \in \mathbb{Z}_{\geq 0}$:
    \begin{align*}
      \int_e^u s \mathrm{d} \eta & =\displaystyle
        [(\eta\log\eta-\eta)\bfP^{(n)}(\cX,\cY)]_e^u
        \ +\\&\displaystyle\int_e^u(1-\log\eta)\Delta (\bfP^{(n)})(\cX,\cY)\mathrm d\eta%
        +\int_e^u o(\cY^{n-1})\mathrm{d} \eta%
        .
    \end{align*} 
    
        Since $(1-\log\eta)\Delta (\bfP^{(n)})(\cX,\cY)$ can be written
        as $T_n(\cX,\cY)$ for some $T_n\in\bR[X,Y]$, the
        previous result shows that there exists $R_n\in\bR[X,Y]$ such
        that
         \[\int_e^u s \mathrm{d} \eta  =
         [\eta\log\eta\cdot(1-\cY)\cdot \bfP^{(n)}(\cX,\cY)]_e^u
        + [\eta\cdot R_n(\cX,\cY)]_e^u+\int_e^u o(\cY^{n-1})\mathrm{d}
        \eta.\]
    The claimed expression, with $Q_n=(1-Y)\bfP^{(n)}+Y\cdot R_n$, follows from the verification that 
    $\int_e^u o(1/ (\log \eta)^{n-1}) \mathrm{d} \eta = o(u/(\log u)^{n-1})$ as $u
    \rightarrow + \infty$, which is unilluminating but easy. Finally, we notice
    that for all $i,j\geq 0, R_i(X, Y) \equiv R_j(X,Y)\bmod \langle X,
       Y\rangle^{\min(i,j)}$, which implies that 
$Q_i(X, Y) \equiv Q_j(X,Y)\bmod \langle X,
   Y\rangle^{\min(i,j)}$.

   The series \bfQ\ has therefore the following expression, which makes
   it easy to compute \bfQ\ from \bfP.
   \[
       \bfQ=(1-Y)\bfP+Y(1+\Delta)^{-1}(1-Y^{-1})\Delta \bfP.
   \]
\end{proof}

\begin{corollary}
\label{coro:dev_rho_only_rho}
We recall that $\bfQ$ is the series introduced in Proposition~\ref{prop:dev_integral_s}. For all $n \in \mathbb{Z}_{\geq 0}$ we have,
    as $u \rightarrow + \infty$:
    \[\rho(u) = \exp\left( -u \log u \left( \bfQ^{(n)}(\cX(u), \cY(u))
    + o \left(    \cY(u)^{n}  \right)      \right)  \right).\]

\end{corollary}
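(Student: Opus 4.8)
The plan is to read the statement off from De~Bruijn's asymptotic formula for $\rho$ recalled above, combined with Proposition~\ref{prop:dev_integral_s}; the only genuine work is tracking the lower-order error terms.

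First I would take logarithms in De~Bruijn's formula. By the substitution $\eta=(e^s-1)/s$ recalled above, which turns $\int_0^{\xi}\frac{se^s-e^s+1}{s}\,\mathrm ds$ into $\int_1^u s\,\mathrm d\eta$, the relation $\rho(u)\sim\frac{e^\gamma}{\sqrt{2\pi u}}\exp\bigl(-\int_0^\xi\frac{se^s-e^s+1}{s}\,\mathrm ds\bigr)$ gives
\[\log\rho(u)=-\int_1^u s\,\mathrm d\eta+\gamma-\frac12\log(2\pi u)+\log(1+o(1))=-\int_1^u s\,\mathrm d\eta+O(\log u)\]
as $u\to+\infty$. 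Then I would split $\int_1^u s\,\mathrm d\eta=\int_1^e s\,\mathrm d\eta+\int_e^u s\,\mathrm d\eta$ and note that $\int_1^e s\,\mathrm d\eta$ is a finite constant, since $s$ is continuous and bounded on $[1,e]$ (with $s(\eta)\to0$ as $\eta\to1$). Hence $\log\rho(u)=-\int_e^u s\,\mathrm d\eta+O(\log u)$.

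Next I would apply Proposition~\ref{prop:dev_integral_s}: for every fixed $n\in\mathbb Z_{\geq0}$ it yields $\int_e^u s\,\mathrm d\eta=u\log u\,(\bfQ^{(n)}(\cX(u),\cY(u))+o(\cY(u)^n))$, so that
\[\log\rho(u)=-u\log u\,\bigl(\bfQ^{(n)}(\cX(u),\cY(u))+o(\cY(u)^n)\bigr)+O(\log u).\]
To conclude I would absorb the $O(\log u)$ term into the error: since $u\log u\cdot\cY(u)^n=u/(\log u)^{n-1}$ and $\log u=o\bigl(u/(\log u)^{n-1}\bigr)$ for each fixed $n$ (equivalently $(\log u)^n/u\to0$), the $O(\log u)$ contribution can be written as $u\log u\cdot o(\cY(u)^n)$ and merged with the existing error term. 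This gives $\log\rho(u)=-u\log u\,(\bfQ^{(n)}(\cX(u),\cY(u))+o(\cY(u)^n))$, and exponentiating proves the corollary.

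I expect the main (and essentially only) obstacle to be this last bookkeeping step: one must check that the $O(\log u)$ arising from the prefactor $e^\gamma/\sqrt{2\pi u}$ and from the $\sim$ in De~Bruijn's formula is negligible compared to $u\log u\,\cY(u)^n$ for each fixed truncation order $n$. This is the single point where the two error scales have to be compared; all the substantive analysis has already been carried out in Proposition~\ref{prop:dev_integral_s}.
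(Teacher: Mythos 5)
Your proof is correct and follows essentially the same route as the paper: take logarithms in De~Bruijn's formula, shift the lower limit of the integral from $1$ to $e$ at the cost of a bounded offset, invoke Proposition~\ref{prop:dev_integral_s}, and absorb everything left over into $u\log u\cdot o(\cY(u)^n)$. Your explicit bookkeeping of the $O(\log u)$ contribution from the prefactor $e^\gamma/\sqrt{2\pi u}$, checked against $u\log u\cdot\cY(u)^n=u/(\log u)^{n-1}$, is if anything slightly more careful than the paper's proof, which simply writes the offset as $O(1)$ before absorbing it.
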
 

\begin{proof}
We use the equivalent of $\rho$ introduced at the very beginning of the section.
    Constant offsets are
    absorbed in the error term that comes from the expansion of $\int^u s\mathrm d\eta$, and the result is a direct consequence of
    Proposition~\ref{prop:dev_integral_s}. 
   \begin{align*}
\rho(u) &  \underset{u \rightarrow +
    \infty}{\sim}\displaystyle\frac{e^{\gamma}}{\sqrt{2 \pi u}} \times \mathrm{exp}
    \left(- \int_1^{u} s \mathrm d\eta \right) \\
         &= \displaystyle\mathrm{exp} \left(- \int_e^{u} s \mathrm d\eta  + O(1)\right)\\
        & = \displaystyle\mathrm{exp}\left(-u \log u \cdot
        \left(\mathbf{Q}^{(n)}(\cX(u),\cY(u)) +o \left(\cY(u)^n \right) \right)
        \right). 
   \end{align*}

\end{proof}

\begin{corollary}
\label{coro:dev_rho_as_smoothness_proba}
    In the optimal parameter range of NFS, the
    asymptotic expansion of the
    smoothness probabilities follows from the previous result.
    For any $n\in\Z_{\geq0}$ we
    have, on both sides (rational and
    algebraic), as $N\rightarrow+\infty$:
    \begin{align*}
        \Psi(M_i,B)/B &= \exp\left( -u \log u \left( \bfQ^{(n)}(\cX(u), \cY(u))
    + o \left(    \cY(u)^{n}  \right)      \right)  \right),
    \end{align*}
    where $u$ denotes $\log M_i/\log B$, which is the size ratio on side $i\in\{0,1\}$.
\end{corollary}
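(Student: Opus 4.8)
The plan is to obtain the statement by chaining Hildebrand's Theorem~\ref{th:rho_proba} with Corollary~\ref{coro:dev_rho_only_rho}; there is no new idea, only a check that we lie in the range where Hildebrand's estimate applies, plus some bookkeeping of error terms. First I would record the relevant orders of magnitude. As already noted after Theorem~\ref{th:rho_proba} and justified by Proposition~\ref{prop:first_term}, for a minimizer $(a,b,d)$ of Problem~\ref{pb:optim2} we have $\log B=b=\Theta(\nu^{1/3}(\log\nu)^{2/3})$, while on either side $i\in\{0,1\}$ the size ratio $u\coloneq\log M_i/\log B$ satisfies $u=\Theta((\nu/\log\nu)^{1/3})$; this follows from the last two identities of Proposition~\ref{prop:first_term} together with the fact that $\log M_0$ and $\log M_1$ differ from $a+\nu/d$ and $d\,a+\nu/d$ only by lower-order terms. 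In particular $u\to+\infty$ and $B\to+\infty$ as $N\to+\infty$, and $u$ is polynomially bounded in $\nu$, hence in $\log B$. Choosing any $\varepsilon\in(0,3/5)$, the bound $\exp((\log B)^{3/5-\varepsilon})$ grows faster than every power of $\nu$, so for $N$ large enough $1\le u\le\exp((\log B)^{3/5-\varepsilon})$ and $M_i=B^{u}$: these are exactly the hypotheses of Theorem~\ref{th:rho_proba}.

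Applying Theorem~\ref{th:rho_proba} with $x=M_i$ and $y=B$ gives
\[\frac{\Psi(M_i,B)}{M_i}=\rho(u)\left(1+O\left(\frac{\log(u+1)}{\log B}\right)\right)=\rho(u)(1+o(1)),\]
since $\log(u+1)/\log B=\Theta((\log\nu)\cdot\nu^{-1/3}(\log\nu)^{-2/3})\to0$. Feeding this into Corollary~\ref{coro:dev_rho_only_rho}, which for every fixed $n\in\Z_{\geq0}$ yields $\rho(u)=\exp(-u\log u\,(\bfQ^{(n)}(\cX(u),\cY(u))+o(\cY(u)^n)))$, and taking logarithms, one obtains
\[\log\frac{\Psi(M_i,B)}{M_i}=-u\log u\cdot\bfQ^{(n)}(\cX(u),\cY(u))-u\log u\cdot o(\cY(u)^n)+o(1).\]
It then remains to absorb the additive $o(1)$ into the structured error term. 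This is legitimate because $u\log u\cdot\cY(u)^n=u(\log u)^{1-n}\to+\infty$ for any fixed $n$, so $o(1)=o(u\log u\cdot\cY(u)^n)$ and the two contributions merge into a single term $-u\log u\cdot o(\cY(u)^n)$. Re-exponentiating yields the stated expansion, uniformly in $i\in\{0,1\}$.

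I do not expect a genuine obstacle: the argument is essentially bookkeeping. The two points that need care are (i) certifying that we sit inside Hildebrand's admissible window $u\le\exp((\log B)^{3/5-\varepsilon})$ --- this is where Proposition~\ref{prop:first_term} is essential, since it shows that $u$ and $\log B$ are only polynomially large in $\nu$ while the window is subexponential in $\log B$ --- and (ii) checking that the multiplicative $(1+o(1))$ factor coming from Hildebrand's estimate is dominated by the error produced by the $\rho$-expansion, which reduces to the elementary fact that $u(\log u)^{1-n}\to+\infty$ for fixed $n$. Finally, there is no circularity: Proposition~\ref{prop:first_term} is derived from the crude Proposition~\ref{prop:cep1}, not from the finer Theorem~\ref{th:rho_proba} or Corollary~\ref{coro:dev_rho_only_rho}.
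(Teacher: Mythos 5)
Your proposal is correct and follows essentially the same route as the paper: use Proposition~\ref{prop:first_term} to place $u$ inside Hildebrand's admissible window, apply Theorem~\ref{th:rho_proba}, and absorb the resulting relative error into the expansion of Corollary~\ref{coro:dev_rho_only_rho} (the paper bounds that error slightly more sharply by $O(1/u)$ via $b=\Theta(u\log u)$, you by $o(1)$; both are swallowed by $u\log u\cdot o\left(\cY(u)^n\right)$ exactly as you argue). Note only that, as in the paper's own proof, what is actually established is the expansion of $\Psi(M_i,B)/M_i$; the ``$/B$'' in the corollary's statement is evidently a typo.
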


\begin{proof}
    As we argued when justifying the use of
    Theorem~\ref{th:rho_proba}, Proposition~\ref{prop:first_term} tells
    us that the optimal parameter range of NFS leads
    to $u=\Theta((\nu/\log\nu)^{1/3})$, with $\nu=\log N$ as in
    Section~\ref{sec:NFS_background}. It follows that $b=\Theta(u\log
    u)$, so that the right-hand side of Theorem~\ref{th:rho_proba} can be
    written as
    \begin{align*}
        \rho(u)\left(1+O\left(\frac{1}{u}\right)\right)&=\rho(u)\exp\left(-u\log
    u\cdot O\left(\frac1{u^2\log u}\right)\right)\\
        &=\exp\left( -u \log u \left( \bfQ^{(n)}(\cX(u), \cY(u))
    + o \left(    \cY(u)^{n}  \right)      \right)  \right).
    \end{align*}
    
    Put otherwise, a function within $O\left(\frac1{\eta^2\log\eta}\right)$ is
    always in \cC, and its associated series is zero.
\end{proof}

\subsection{Computation of \bfQ}

Propositions~\ref{prop:dev_s} and~\ref{prop:s_formula} are completely
explicit. In Proposition~\ref{prop:dev_integral_s}, the expression of
$\bfQ$
is straightforward to compute as well.
For example, the terms of $\bfQ$ up to degree 3, are:
  \[\bfQ ^{(3)}(X,Y) =  1+X-Y+XY-Y^2-\frac{X^2Y}2 + 2XY^2-2Y^3.\]

\section{Asymptotic expansion of $\xi$}
\label{sec:refined_compl}

This section is devoted to the analysis of the asymptotic behavior of the
unknown function $\xi$ involved in the heuristic complexity of NFS. To this
end, we compute the asymptotic
expansion of the unknown functions written $o(1)$ in Prop.~\ref{prop:first_term}.

\subsection{Extension of the class $\cC$}
In order to express conveniently our results on the asymptotic expansion of the function
$\xi$, we introduce a variant of the class $\mathcal C$ defined in
Section~\ref{sec:smoothness}.

\begin{definition}\label{def:class-Cab} Let $\alpha, \beta \in \mathbb{R}$. The class of functions $\cC^{[\alpha,\beta]}$ is the set of functions $f$ with values in $\mathbb R$ such that $\nu\mapsto \nu^{-\alpha}(\log\nu)^{-\beta} f(\nu)$
is in $\mathcal C$.
\end{definition}

\medskip

In particular, $\mathcal C^{[0,0]} = \mathcal C$. If $f$ is in
$\cC^{[\alpha,\beta]}$, then we denote by $\mathbf F\in\mathbb R[X,Y]$ the series
associated to $\nu\mapsto \nu^{-\alpha}(\log\nu)^{-\beta} f(\nu)$ and
call it, by extension, the series associated to~$f$. Note however that
this extension deserves some caution, as the association
makes sense only relative to a given $(\alpha,\beta)$.

The introduction of these classes $\mathcal
C^{[\alpha,\beta]}$ is motivated by the fact that in the context of NFS,
the function $u\mapsto -\log\rho(u)$, which gives the
relevant smoothness probabilities, is in $\cC^{[1,1]}$ (see
Corollaries~\ref{coro:dev_rho_only_rho}
and~\ref{coro:dev_rho_as_smoothness_proba}).
The following proposition establishes stability properties for elements of the classes $\cC^{[\alpha, \beta]}$. 

\begin{proposition} \label{prop:stability}
Let $f\in\mathcal C^{[\alpha,\beta]}$ with associated
series $\mathbf F\in\mathbb R[[X,Y]]$ such that $\alpha> 0$ and $\mathbf F(0,0) > 0$. Then
\begin{enumerate}
  \item $\log f\in\mathcal C^{[0,1]}$ and its associated
    series is
    $\alpha+\beta X/\alpha + (Y\log \mathbf F)/\alpha;$
  \item $\cY(f)=(\log f)^{-1}\in\cC$ and its associated
    series is
    \[Y/(\alpha+\beta X/\alpha + (Y\log \mathbf F)/\alpha);\]
\item if $\alpha > 0$, then $\cX(f)=\log\log f/\log f\in \mathcal C$ and its associated
    series is
    \[(X+Y\log(\alpha+\beta X/\alpha + (Y\log \mathbf F)/\alpha))/(\alpha+\beta
    X/\alpha + (Y\log \mathbf F)/\alpha).\]
\end{enumerate}
\end{proposition}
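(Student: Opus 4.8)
The plan is to deduce all three items from the algebra of $\cC$ --- closure under sums and products, with the association $h\mapsto\mathbf H$ being a ring homomorphism --- together with the two closure properties recorded right after Definition~\ref{def:class-C}: if $h\in\cC$ has associated series $\mathbf H$ with $\mathbf H(0,0)\neq0$ then $1/h\in\cC$ with series $1/\mathbf H$, and if moreover $\mathbf H(0,0)>0$ then $\log h\in\cC$ with series the formal logarithm $\log\mathbf H$. I would first fix the basic object: by hypothesis $g\colon\nu\mapsto\nu^{-\alpha}(\log\nu)^{-\beta}f(\nu)$ lies in $\cC$ with associated series $\mathbf F$, and the case $n=0$ of the definition gives $g(\nu)\to\mathbf F(0,0)>0$; hence $g$ is eventually positive, $\log g$ is defined near $+\infty$, and $\log g\in\cC$ with associated series $\log\mathbf F$.

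For item~(1), I would take logarithms in $f(\nu)=\nu^{\alpha}(\log\nu)^{\beta}g(\nu)$ and divide by $\log\nu$, obtaining
\[
  \frac{\log f(\nu)}{\log\nu}=\alpha+\beta\,\cX(\nu)+\cY(\nu)\,\log g(\nu).
\]
The right-hand side is a finite combination of elements of $\cC$, so $\nu\mapsto\log f(\nu)/\log\nu$ is in $\cC$; reading off its associated series from those of $\cX$, $\cY$ and $\log g$ (namely $X$, $Y$ and $\log\mathbf F$) gives the series claimed in item~(1). Equivalently $\log f\in\cC^{[0,1]}$ with that series, which I will call $\mathbf L$; note $\mathbf L(0,0)=\alpha$ since every other contribution carries a factor $X$ or $Y$.

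Items~(2) and~(3) then follow by the same bookkeeping applied to $P\colon\nu\mapsto\log f(\nu)/\log\nu\in\cC$, with series $\mathbf L$ and $\mathbf L(0,0)=\alpha>0$. Since $\mathbf L(0,0)\neq0$, $1/P\in\cC$ with series $1/\mathbf L$, so
\[
  \cY(f)(\nu)=\frac1{\log f(\nu)}=\cY(\nu)\cdot\frac1{P(\nu)}
\]
is in $\cC$ with series $Y/\mathbf L$. For item~(3), since $P$ is eventually positive, $\log\log f(\nu)=\log\log\nu+\log P(\nu)$, whence
\[
  \cX(f)(\nu)=\frac{\log\log f(\nu)}{\log f(\nu)}=\frac1{P(\nu)}\bigl(\cX(\nu)+\cY(\nu)\log P(\nu)\bigr);
\]
here $\log P\in\cC$ with series $\log\mathbf L$ (using $\mathbf L(0,0)>0$) and $1/P\in\cC$ with series $1/\mathbf L$, so $\cX(f)\in\cC$ with series $(X+Y\log\mathbf L)/\mathbf L$, as in item~(3).

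I do not expect a genuinely hard step here: the statement is essentially a repackaging of the algebra of $\cC$. The one delicate point --- and it already sits inside the closure properties I am invoking --- is that $\cX$ and $\cY$ are of different orders, $\cX/\cY=\log\log\eta\to+\infty$, so that substituting $(\cX,\cY)$ into a formal power series and truncating at total degree $n$ is harmless only because a monomial $X^iY^j$ with $i+j>n$ evaluates to $(\log\log\eta)^i(\log\eta)^{-(i+j)}=o((\log\eta)^{-n})=o(\cY^n)$, using that $(\log\log\eta)^i=o((\log\eta)^{\varepsilon})$ for every fixed $i$ and $\varepsilon>0$. Apart from that, the only care required is to track where the hypotheses enter: $\mathbf F(0,0)>0$ is what legitimises $\log g$ (needed already for item~(1)), and $\alpha>0$, i.e.\ $\mathbf L(0,0)>0$, is what legitimises $1/P$ in items~(2),(3) and $\log P$ in item~(3).
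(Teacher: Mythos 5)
Your proof is correct and is exactly the kind of direct computation that the paper's one-line proof (``follows by direct computations'') leaves implicit: write $\log f/\log\nu=\alpha+\beta\,\cX+\cY\log g$ and propagate the closure of $\cC$ under sums, products, inverses and logarithms, with the truncation subtlety ($\cX^i\cY^j=o(\cY^n)$ for $i+j>n$) handled as you note. One caveat: your computation yields the series $\alpha+\beta X+Y\log\mathbf F=\alpha\bigl(1+\beta X/\alpha+(Y\log\mathbf F)/\alpha\bigr)$, which coincides with the expression printed in the statement only under that parenthesization (or when $\alpha=1$), so in asserting that you ``read off the series claimed in item~(1)'' you have silently corrected what appears to be a misprint in the statement; the same remark applies to the denominators in items~(2) and~(3), and it would be better to flag the discrepancy explicitly than to identify the two expressions without comment.
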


\begin{proof}
The result follows by direct computations.
\end{proof}

\subsection{Two new proven terms for the NFS complexity}
\label{sec:first_terms_dev}

In this section we develop the functions of interest $a,b,d$, defined
in Section~\ref{sec:NFS_background}. The end result of this section is an asymptotic
expansion of the complexity of NFS with two new terms. We
also introduce several reasonings that will be intensively used and
automatized in Section~\ref{sec:coeffs_expansion}.

\begin{theorem}\label{th:first_terms} The minimizers $a,b,d$ satisfy :
  \begin{align*}
    a &= (8/9)^{1/3}\nu^{1/3}(\log\nu)^{2/3}(1+a_{10} \cX(\nu) + a_{01} \cY(\nu) +
    o(\cY(\nu))),\\
    b &= (8/9)^{1/3}\nu^{1/3}(\log\nu)^{2/3}(1+a_{10} \cX(\nu) + a_{01} \cY(\nu) +
    o(\cY(\nu))),\\
    d &= (3\nu/\log\nu)^{1/3}(1+d_{10}\cX(\nu) + d_{01} \cY(\nu) + o(\cY(\nu))),
  \end{align*}
  where $a_{10} = 4/3, a_{01} = -2 \log 2 + \log 3/6 -2, d_{10} = -2/3$ and
  $d_{01} = \log 2 -5 \log 3/6  +1 $.
\end{theorem}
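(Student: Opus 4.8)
The plan is to solve Problem~\ref{pb:optim2} asymptotically by substituting the ansatz from Proposition~\ref{prop:first_term} (the leading-order solution) augmented with unknown correction terms $a_{10}\cX(\nu)+a_{01}\cY(\nu)$ and $d_{10}\cX(\nu)+d_{01}\cY(\nu)$, plugging into the constraint~\eqref{eq:main_constraint}, and matching coefficients. Concretely, write $a = (8/9)^{1/3}\nu^{1/3}(\log\nu)^{2/3}(1+\alpha(\nu))$ and $d = (3\nu/\log\nu)^{1/3}(1+\delta(\nu))$ where $\alpha,\delta \in \cC$ vanish at infinity; the optimality condition $a=b$ at leading order and a Lagrange-multiplier argument at the refined level will force $a$ and $b$ to share the same two correction coefficients $a_{10},a_{01}$, so really there are only four unknowns $a_{10},a_{01},d_{10},d_{01}$ to pin down. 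I would first record that optimality of $\max(a,b)$ under the single constraint means, at the refined level, that we may take $a=b$ exactly (the standard argument: if $a>b$ we could raise $b$, relax $d$, and lower $a$, and conversely), which also explains why $a-b=0$ can be imposed throughout.

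Next I would expand the two smoothness terms in~\eqref{eq:main_constraint}. By Corollary~\ref{coro:dev_rho_as_smoothness_proba}, $p(\log M_i,b) = -u_i\log u_i\,(\bfQ^{(n)}(\cX(u_i),\cY(u_i)) + o(\cY(u_i)^n))$ where $u_0 = (a+\nu/d)/b$ and $u_1 = (d\,a+\nu/d)/b$ are the two size ratios. From Proposition~\ref{prop:first_term} these ratios are $\tfrac12(3\nu/\log\nu)^{1/3}(1+o(1))$ and $\tfrac32(3\nu/\log\nu)^{1/3}(1+o(1))$ respectively; with the refined ansatz they become elements of $\cC^{[1/3,-1/3]}$ whose associated series start at $1/2$ and $3/2$ and whose next coefficients are explicit affine combinations of $a_{10},a_{01},d_{10},d_{01}$ (and the known constants $\tfrac13$, etc.). Using Proposition~\ref{prop:stability}, I would compute $\log u_i$, $\cX(u_i)$, $\cY(u_i)$ as elements of $\cC$, and hence $-u_i\log u_i\,\bfQ^{(n)}(\cX(u_i),\cY(u_i))$ as an element of $\cC^{[1/3,2/3]}$ — this is exactly the term matching $b = (8/9)^{1/3}\nu^{1/3}(\log\nu)^{2/3}(1+\cdots)$ on the other side of the constraint (note $2a-b = b + 2(a-b) = b$ once $a=b$). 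Matching the constant term reproduces Proposition~\ref{prop:first_term}; matching the coefficients of $\cX(\nu)$ and of $\cY(\nu)$ gives two equations. These are only two equations for four unknowns, so I also need the first-order optimality (stationarity in $d$): differentiating the constraint with respect to $d$ and setting the Lagrangian gradient condition, or equivalently observing that the optimal $d$ is the one minimizing $\max(a,b)$ along the constraint surface, yields two further relations among $a_{10},a_{01},d_{10},d_{01}$. Solving the resulting $4\times 4$ linear system produces $a_{10}=4/3$, $a_{01}=-2\log 2+\log 3/6-2$, $d_{10}=-2/3$, $d_{01}=\log 2 - 5\log 3/6 + 1$.

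The main obstacle I anticipate is bookkeeping: correctly propagating the $\cX,\cY$ expansions through the two compositions $u_i\mapsto u_i\log u_i$ and $u_i\mapsto \bfQ^{(1)}(\cX(u_i),\cY(u_i))$ while keeping track of which $\cX$'s and $\cY$'s are in the variable $\nu$ versus in the variable $u_i$, since $\cX(u_i)$ and $\cY(u_i)$ are themselves nontrivial series in $\cX(\nu),\cY(\nu)$ starting respectively at $\tfrac13\cdot\tfrac{\cX(\nu)}{\cdots}$-type expressions. The cleanest route is to work entirely inside the algebra $\bR[[X,Y]]$ (with $X=\cX(\nu)$, $Y=\cY(\nu)$) from the outset, using the stability of $\cC$ and $\cC^{[\alpha,\beta]}$ under the operations of Proposition~\ref{prop:stability}, so that the constraint~\eqref{eq:main_constraint} becomes an identity between two truncated power series in $X,Y$; then each coefficient equation is purely formal and linear in the four unknowns at the order we need. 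I would need only $\bfQ^{(1)} = 1+X-Y$ (from the explicit expansion $\bfQ^{(3)}$ given at the end of Section~\ref{sec:smoothness}) together with the degree-$1$ parts of all intermediate series, which keeps the computation finite and, modulo some patience, routine; the stationarity-in-$d$ equations can likewise be extracted by perturbing $d_{10},d_{01}$ and demanding that the coefficient of the perturbation in $\max(a,b)$ vanishes, which again is a finite linear computation.
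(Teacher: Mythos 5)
There is a genuine gap: your argument presupposes exactly what the theorem asserts. You write the minimizers with correction terms $a_{10}\cX+a_{01}\cY$ and $d_{10}\cX+d_{01}\cY$ (indeed with $\alpha,\delta\in\cC$) and then ``match coefficients,'' but Proposition~\ref{prop:first_term} only gives $o(1)$ error terms, and nothing guarantees a priori that the true minimizers of Problem~\ref{pb:optim2} admit an expansion of the form $1+c\,\cX+c'\,\cY+o(\cY)$ at all --- the paper itself only \emph{conjectures} (Conjecture~\ref{conj:general_shape}) that $a,b,d$ lie in the classes $\cC^{[\alpha,\beta]}$, and the bulk of the actual proof consists of successively upgrading the $o(1)$'s to $O(\cX)$, then pinning the limit, then $O(\cX^2)$, etc. That upgrading is driven by two ingredients your proposal does not supply: (i) a baseline \emph{existence} result (Lemma~\ref{lem:existence_first_terms}), obtained via an intermediate-value-theorem construction, exhibiting feasible functions with prescribed expansion coefficients; and (ii) the completed-square structure of the expanded constraint (e.g.\ $\widetilde a=\tfrac23\widetilde b^{\,2}+\tfrac13\widetilde d^{\,2}+\cdots$), so that minimality of $\max(a,b)$ \emph{against the baseline} forces the square terms to be small and thereby determines $\widetilde b,\widetilde d$ along with $\widetilde a$. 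Without the baseline, coefficient matching inside an assumed ansatz class says nothing about arbitrary minimizers; without the square structure, you have no mechanism to convert minimality into the bounds $\widetilde a=O(\cX)$, $\widetilde b,\widetilde d=O(\cX^{1/2})$, etc.

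Your substitute for (ii), a Lagrange-multiplier/stationarity-in-$d$ condition, is also not available as stated: $p(u,v)=\log(\Psi(e^u,e^v)/e^u)$ is built from a counting function and is not differentiable, and only its asymptotic expansion (Corollary~\ref{coro:dev_rho_as_smoothness_proba}, valid in the optimal parameter range) is known; moreover ``perturbing $d_{10},d_{01}$ and demanding the coefficient of the perturbation vanish'' is a variational argument within the ansatz class, which again controls only hypothetical minimizers already assumed to lie in that class. Likewise, the claim that one may impose $a=b$ exactly by an exchange argument is not justified (monotonicity of the constraint in $b$ is not clear once the equality constraint is imposed), and the paper never uses it --- it proves only that $a$ and $b$ share the same expansion to the stated order, with $b$'s expansion in fact proven to one order less at each stage. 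The formal $\bR[[X,Y]]$ bookkeeping you describe is essentially the computational content of the paper's Steps 1--5, but the logical skeleton (existence baseline plus minimality via completed squares, iterated step by step) is missing, and it is precisely that skeleton that turns the coefficient computation into a proof about the minimizers.
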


Before proving Thm.~\ref{th:first_terms}, we show that there exist functions
$a, b, d$ as in Thm.~\ref{th:first_terms} which satisfy
Eq.~\eqref{eq:main_constraint}. They will serve as a baseline for our
minimization problem. One could wonder where the constants
$a_{ij}$ and $d_{ij}$ in the statement of Thm.~\ref{th:first_terms}, and
also in 
the
following lemma, come from. To obtain these constants, we used a computer
algebra system to iteratively expand the
constraint~\eqref{eq:main_constraint} and then we minimized the coefficients of the expansions of $a,b,d$ by hand. In the rest of the section, we will always omit the argument $\nu$ of the functions $\cX$ and $\cY$.

\begin{lemma}\label{lem:existence_first_terms}
  There exist functions $a,b,d$ which satisfy Eq.~\eqref{eq:main_constraint} and such that 
  \begin{align*}
    a &= (8/9)^{1/3}\nu^{1/3}(\log\nu)^{2/3}(1+a_{10} \cX + a_{01} \cY + a_{20}
    \cX^2 + a_{11}\cX\cY + a_{02}\cY^2 + o(\cY^2))\\
    b &= (8/9)^{1/3}\nu^{1/3}(\log\nu)^{2/3}(1+a_{10} \cX + a_{01} \cY + a_{20}
    \cX^2 + a_{11}\cX\cY + a_{02}\cY^2 + o(\cY^2))\\
    d   &= (3\nu/\log\nu)^{1/3}(1+d_{10}\cX + d_{01} \cY + o(\cY)),
\end{align*}
  where
  \[\begin{array}{c}a_{10} = 4/3,\quad a_{01} = -2 \log 2 + \log 3/6
      -2,\\a_{20} = -4/9,
      \quad a_{11}
=4\log 2/3-\log 3/9+4,\\
a_{02}=-(\log 2)^2 +(\log 2\cdot\log 3)/6-7(\log 3)^2/36-6\log 2+\log 3/2-5,\\
d_{10} =
  -2/3,\text{\quad and\quad}
  d_{01} = \log 2 -5 \log 3/6  +1.
\end{array}\]
\end{lemma}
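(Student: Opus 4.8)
The plan is to exhibit the functions $a,b,d$ explicitly in the stated form and then verify that Eq.~\eqref{eq:main_constraint} holds, computing enough terms of the asymptotic expansion of the left-hand side to pin down the coefficients $a_{ij}$ and $d_{ij}$. First I would set $a=b$ as suggested and write $a = (8/9)^{1/3}\nu^{1/3}(\log\nu)^{2/3}\cdot A(\nu)$ with $A = 1+a_{10}\cX+a_{01}\cY+a_{20}\cX^2+a_{11}\cX\cY+a_{02}\cY^2+o(\cY^2)$, and $d = (3\nu/\log\nu)^{1/3}\cdot D(\nu)$ with $D = 1+d_{10}\cX+d_{01}\cY+o(\cY)$; note $A,D\in\cC$ with $A(0,0)=D(0,0)=1$. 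The key observation is that the two size ratios entering the constraint, namely $u_0 = (a+\nu/d)/b$ and $u_1 = (da+\nu/d)/b$, are, by Proposition~\ref{prop:first_term}, of the form $u_0 = \tfrac12(3\nu/\log\nu)^{1/3}(1+o(1))$ and $u_1 = \tfrac32(3\nu/\log\nu)^{1/3}(1+o(1))$; more precisely, using the product/quotient stability of $\cC$ and the explicit forms of $A$ and $D$, one gets $u_0, u_1 \in \cC^{[1/3,-1/3]}$ with computable associated series. In particular $u_0$ has associated series $\tfrac12\cdot 3^{1/3}\cdot(\text{something with }D,A)$, and similarly for $u_1$, and these series have nonzero (indeed positive) constant term, so by the remark following Definition~\ref{def:class-C} and by Proposition~\ref{prop:stability} the quantities $\log u_i$, $\cX(u_i)=\log_2 u_i/\log u_i$ and $\cY(u_i)=1/\log u_i$ all lie in $\cC$ with computable series.

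Next I would feed these into Corollary~\ref{coro:dev_rho_as_smoothness_proba}: for $i\in\{0,1\}$ we have $p(\log M_i, b) = \log(\Psi(M_i,B)/B) - \log B + \log B = -u_i\log u_i\cdot(\bfQ^{(n)}(\cX(u_i),\cY(u_i)) + o(\cY(u_i)^n))$ (after absorbing the harmless $\log B$ offset, exactly as in the proof of Corollary~\ref{coro:dev_rho_only_rho}), where $\bfQ$ is the explicitly computable series of Proposition~\ref{prop:dev_integral_s}, whose low-degree part $\bfQ^{(3)}(X,Y)=1+X-Y+XY-Y^2-\tfrac{X^2Y}{2}+2XY^2-2Y^3$ is recorded in Section~\ref{sec:smoothness}. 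Since $u_i\log u_i = \Theta(\nu^{1/3}(\log\nu)^{2/3})$, the constraint~\eqref{eq:main_constraint}, namely $p(a+\nu/d,b)+p(da+\nu/d,b)+2a-b=0$, becomes after dividing through by $(8/9)^{1/3}\nu^{1/3}(\log\nu)^{2/3}$ an identity in $\cC$: a series identity in $\cX(\nu),\cY(\nu)$ of the shape $\Phi(\cX,\cY)=o(\cY^2)$ where $\Phi$ is a polynomial expression built by composition from $\bfQ$, $\bfP$ (via the series of $A,D$), the stability formulas of Proposition~\ref{prop:stability}, and the contribution $2A-A$ from the $2a-b$ term. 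The coefficients of $\Phi$ up to total degree $2$ are rational functions of $a_{10},\dots,a_{02},d_{10},d_{01}$; I would then check that the stated numerical values of these constants make $\Phi^{(2)}$ vanish identically. This last verification is a finite symbolic computation — the kind the authors describe doing in a computer algebra system — and I would just assert it, pointing to the accompanying \textsf{SageMath} code for the unilluminating details.

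The main obstacle is bookkeeping rather than conceptual: correctly propagating the two-variable asymptotic expansions through the nonlinear composition $u_i \mapsto (\cX(u_i),\cY(u_i)) \mapsto \bfQ(\cX(u_i),\cY(u_i)) \mapsto -u_i\log u_i\cdot\bfQ(\cdots)$, while keeping track of which variables are $\cX(\nu),\cY(\nu)$ versus $\cX(u_i),\cY(u_i)$ and re-expressing the latter in terms of the former via Proposition~\ref{prop:stability}. One subtle point worth stating carefully is that $u_0\log u_0$ and $u_1\log u_1$ differ by a multiplicative factor that is itself a nonconstant element of $\cC$ (coming from the ratio $u_1/u_0 = 3(1+o(1))$ and from $\log u_1 - \log u_0 = \log 3 + o(1)$), which is exactly why the constants $\log 2$ and $\log 3$ appear in $a_{01},a_{11},a_{02},d_{01}$; I would make sure the $\tfrac12$ and $\tfrac32$ prefactors and the $\log 3$ shift are handled before claiming the coefficient match. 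Once the degree-$\leq 2$ coefficients of the constraint are seen to vanish for the stated values, the existence of $a,b,d$ as in the statement — built as explicit elements of $\cC^{[1/3,2/3]}$ and $\cC^{[1/3,-1/3]}$ satisfying Eq.~\eqref{eq:main_constraint} up to $o(\cY^2)$ — follows, noting that the $o(\cY^2)$ slack in $A$ gives exactly the freedom needed to turn the ``$=o(\cY^2)$'' approximate identity into an exact solution of the constraint.
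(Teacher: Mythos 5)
Your symbolic-expansion plan (expressing the two size ratios, pushing them through Corollary~\ref{coro:dev_rho_as_smoothness_proba} and Proposition~\ref{prop:stability}, and checking with a computer algebra system that the stated constants kill all coefficients of the normalized constraint up to the required degree) is essentially the same bookkeeping the paper performs. But there is a genuine gap at the end: Eq.~\eqref{eq:main_constraint} is an \emph{exact} equation, and the lemma asserts the existence of functions satisfying it exactly. What your argument produces is a triple $(a,b,d)$ for which the expanded constraint vanishes only up to $o(\cY^2)$ after normalization, and your closing remark that ``the $o(\cY^2)$ slack in $A$ gives exactly the freedom needed to turn the approximate identity into an exact solution'' is precisely the statement that needs proof; it does not follow from the coefficient verification. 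One must show that the remainder function can actually be \emph{chosen} so that the nonlinear constraint holds identically in $\nu$, and nothing in your write-up supplies such a choice.

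The paper's proof provides the missing mechanism. It takes $b$ and $d$ to be the bare truncated expressions (no error terms at all) and inserts into $a$ a single extra term $\widetilde a\,\cX^3$ with $\widetilde a$ an unknown function of $\nu$. Expanding the constraint then rewrites it in the form $(\widetilde a - 32/81) = \varepsilon(\widetilde a,\nu)$, where $\varepsilon$ is continuous and $\varepsilon(t,\nu)\to 0$ as $\nu\to\infty$ for each fixed $t$; evaluating at $t_-=31/81$ and $t_+=33/81$ and applying the intermediate value theorem yields, for every sufficiently large $\nu$, a value $\widetilde a(\nu)\in[t_-,t_+]$ making the constraint hold exactly. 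Since $\widetilde a$ is bounded and $\cX^3=o(\cY^2)$, the resulting functions still have the stated asymptotic expansions. Without a continuity/fixed-point step of this kind in the remainder, your construction only satisfies the constraint approximately, and in particular cannot serve as the baseline upper bound that the minimality arguments in Theorem~\ref{th:first_terms} repeatedly invoke.
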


\begin{proof}
Set
\begin{align*}
  a &= (8/9)^{1/3}\nu^{1/3}(\log\nu)^{2/3}(1+a_{10} \cX + a_{01} \cY + a_{20}
  \cX^2 + a_{11}\cX\cY + a_{02}\cY^2 + \widetilde{a}\cX^3)\\
  b &= (8/9)^{1/3}\nu^{1/3}(\log\nu)^{2/3}(1+a_{10} \cX + a_{01} \cY + a_{20}
  \cX^2 + a_{11}\cX\cY + a_{02}\cY^2)\\
  d &= (3\nu/\log\nu)^{1/3}(1+d_{10}\cX + d_{01} \cY),
\end{align*} 
where $a_{10}, a_{01}, a_{20}, a_{11}, a_{02}, d_{10}, d_{01}$ are as in the lemma, and $\widetilde{a}$ is an unknown
function of the variable $\nu$.

Given these expressions, we wish to rewrite Eq.~\eqref{eq:main_constraint}
as a function of $\widetilde{a}$. This is particularly tedious, but
straightforward. The only needed tools are formulas in
Prop.~\ref{prop:stability} over the function field $\mathbb
R(\widetilde{a})$. The code repository mentioned in the introduction
of this article shows how the expansion can be carried out with a
    computer algebra system (and the same holds for other calculations in
    this section). We obtain
that Eq.~\eqref{eq:main_constraint} with the functions set above can be
  rephrased as:
\[(\widetilde{a}-32/81) = \varepsilon(\widetilde{a}, \nu),\]
for a continuous function $\varepsilon$ such that for all $t
\in\mathbb R$,
$\lim_{\nu\rightarrow\infty}\varepsilon(t, \nu) = 0$. 
  Let now $t_-=31/81$ and $t_+=33/81$. Since  $\varepsilon(t_-,
  \nu)$ and $\varepsilon(t_+,
  \nu)$ both tend to zero, we can define $\nu_-$ and $\nu_+$ such that
  \begin{align*}
      \forall\nu>\nu_-,&\ (t_--32/81)-\varepsilon(t_-,
  \nu)<0,\\
      \forall\nu>\nu_+,&\ (t_+-32/81)-\varepsilon(t_+,
  \nu)>0.
  \end{align*}
  
  By the
  intermediate value theorem, we obtain that for any
  $\nu>\max(\nu_-,\nu_+)$, there exists 
$t\in [t_-,t_+]$ such that $(t-32/81) =
\varepsilon(t, \nu).$ Let now $\widetilde{a}$ be the function of
  $\nu$ that
returns such a number $t$. Then the functions $a$, $b$, and $d$ defined
  above
 satisfy by construction the desired property.
\end{proof}

\begin{proof}[Proof of Theorem~\ref{th:first_terms}]
The roadmap of the proof is the following:
\begin{enumerate}
  \item We express the constraint
\eqref{eq:main_constraint} using a sufficiently precise asymptotic
        expansion of
	the smoothness probability
(given by Corollary \ref{coro:dev_rho_as_smoothness_proba}) and the
        asymptotic expansions of $a,b,d$ known
so far. Then we prove that the $o(1)$ involved in the asymptotic
        expansions of $a,b,d$
are actually in the class $O(\cX^\lambda\cY^\mu)$ for some $\lambda, \mu
\geq 0$ (not both being zero) so that we can write these $o(1)$ as
$C\cdot\cX^\lambda\cY^\mu$ where $C$ is a function bounded at a neighbourhood of $\infty$. 
  \item We prove that $C = c + o(1)$ where $c$ is a constant computed along the way and restart the whole process using the more precise asymptotic
        expansions
for $a,b,d$ that we have just obtained in order to compute the next
    coefficients.
\end{enumerate}

\smallskip

\textit{Step 1}: By Proposition \ref{prop:first_term}, minimizers can be
written as
\begin{align*}
a &= (8/9)^{1/3}\nu^{1/3}(\log\nu)^{2/3}(1+\widetilde{a}),\\
b &= (8/9)^{1/3}\nu^{1/3}(\log\nu)^{2/3}(1+\widetilde{b}),\\
d &= (3\nu/\log\nu)^{1/3}(1+\widetilde{d}),
\end{align*}
where $\widetilde{a}, \widetilde{b}, \widetilde{d}= o(1)$.

Direct computations and simplifications (that involve Taylor series expansions,
Prop.~\ref{prop:stability}, and the asymptotic expansion of the
  smoothness probability in
Corollary~\ref{coro:dev_rho_as_smoothness_proba}) that take into account the fact that
  $\widetilde{
  a}, \widetilde{b}, \widetilde{d}= o(1)$ rephrase Eq.~\eqref{eq:main_constraint} as 
\[\widetilde{a} = \frac{2}{3} \widetilde{b}{\mathstrut}^{\>2}
+ \frac{1}{3} \widetilde{d}{\mathstrut}^{\>2}   + O(\cX). \]

    The last equation shows that $\widetilde a
    \cX^{-1} $ is bounded below by a finite constant. Moreover,
Lemma~\ref{lem:existence_first_terms} ensures the existence of functions
    $a_0, b_0, d_0$ that can be used as substitutes for $\widetilde a,
    \widetilde b, \widetilde d$ above, that satisfy the
    constraint~\eqref{eq:main_constraint}, and such that $\lim
    {a_0}\cX^{-1}$ exists and is finite.
    Since $a,b,d$ are minimizers of
    Problem~\ref{pb:optim2}, 
    $\widetilde a
    \cX^{-1} $ is also
    upper bounded by ${a_0}\cX^{-1}$.
    Therefore, $\widetilde a\in
    O(\cX)$, whence the same also holds for $\widetilde
    {b}{\mathstrut}^{\>2}$ and $\widetilde
    {d}{\mathstrut}^{\>2}$.

Replacing $\widetilde{a}, \widetilde{b}, \widetilde{d}$ respectively by $\overline
a\cX$, $\overline b\cX^{\frac{1}{2}}$, $\overline
d\cX^{\frac{1}{2}}$ for some functions $\overline a, \overline
b,\overline d$ bounded at a neighborhood of $+\infty$ in Eq.~\eqref{eq:main_constraint}, we obtain that
\[-\overline{a} + \frac{2}{3}
\overline{b}^2+\frac{1}{3}\overline{d}^2+\frac{4}{3}  = o(1),\]
which means in particular that $\overline{a} \geq 4/3+o(1)$.
By minimality of
$a$ we must also have $\overline{a} \leq 4/3+o(1)$ so as not to
contradict Lemma~\ref{lem:existence_first_terms}. So $\overline{a} = 4/3+o(1)$ and
we must necessarily have $\overline{b} = \overline{d} =o(1)$. Therefore, we obtain
\begin{align*}
a &= (8/9)^{1/3}\nu^{1/3}(\log\nu)^{2/3}(1+ 4\cX/3 +\widetilde
a\cX),\\
b &= (8/9)^{1/3}\nu^{1/3}(\log\nu)^{2/3}(1+\widetilde{b}\cX^{\frac{1}{2}}),\\
d &= (3\nu/\log\nu)^{1/3}(1+\widetilde{d}\cX^{\frac{1}{2}}),
\end{align*} for some fresh functions $\widetilde{a}, \widetilde{b}, \widetilde{d}= o(1)$. 

\medskip

\textit{Step 2}: We use the result obtained in Step~1 and the asymptotic
expansion of the smoothness probability
(Corollary~\ref{coro:dev_rho_as_smoothness_proba})
to deduce by direct computations the following equality enforced by Eq.~\eqref{eq:main_constraint}:
\[\left(-\widetilde{a} + \frac{2}{3}
\widetilde{b}{\mathstrut}^{\>2} +
\frac{1}{3}\widetilde{d}{\mathstrut}^{\>2} \right) \cdot\cX = O(\cY).\] 

Following the
same reasoning as in Step~1, $\widetilde{a} \cX\cY^{-1}$, $ \widetilde{b}
\cX^{\frac{1}{2}} \cY^{-\frac{1}{2}}$ and $\widetilde{d}\cX^{\frac{1}{2}}
\cY^{-\frac{1}{2}}$ must be bounded at a neighborhood of $+\infty$. Replacing
$\widetilde{a}$ (resp.~$\widetilde{b}$, $\widetilde{d}$) by $\overline
a\cX^{-1}\cY$ (resp.~$\overline
  b\cX^{-\frac{1}{2}}\cY^{\frac{1}{2}}$ and $\overline
  d\cX^{-\frac{1}{2}}\cY^{\frac{1}{2}}$) for some functions $\overline
  a,\overline b,\overline d$ bounded at a neighborhood of $+\infty$,
  we obtain the following equality:   
\[-\overline a + \frac{2\overline b^2}3 + \frac{\overline d^2}3 - 2\log 2 +
\frac{\log 3}6 - 2
= o(1).\] 

By minimality and using our baseline result
Lemma~\ref{lem:existence_first_terms} as we did in Step 1, the equalities
$\overline{a} = -2\log 2 +  \log 3/6 -2 + o(1)$,
$\overline{b} = o(1)$,
and $\overline{d} = o(1)$
must hold, which means that:
\begin{align*}
a &= (8/9)^{1/3}\nu^{1/3}(\log\nu)^{2/3}(1+ 4\cX/3+ (-2\log 2 +
\log 3/6 -2)\cY +\widetilde{a}\cY),\\
b &= (8/9)^{1/3}\nu^{1/3}(\log\nu)^{2/3}(1+\widetilde{b}\cY^{\frac{1}{2}}),\\
d &= (3\nu/\log\nu)^{1/3}(1+\widetilde{d}\cY^{\frac{1}{2}}),
\end{align*} for some fresh functions $\widetilde{a}, \widetilde{b}, \widetilde{d}= o(1)$. 

\medskip

\textit{Step 3}: Again we use the asymptotic expansion obtained in Step~2 to refine
the asymptotic equality from Eq.~\eqref{eq:main_constraint} and
obtain:
\begin{align*}
  \left(\left(-2\log(2) + \frac{5\log(3)}{3} - 2+o(1)\right)\widetilde{d}+ \left(8 \log(2) - \frac{2\log(3)}{3} +
  2+o(1)\right)\widetilde{b}\right)\frac{\cY^{\frac{3}{2}}}{3} &\\
  +\left(-\widetilde{a}+ \frac{\widetilde{d}{\mathstrut}^{\>2}}{3} + \frac{2 {\widetilde
  b}{\mathstrut}^{\>2}}{3}\right)\cY +
  \left(\frac{4\widetilde{d}}{9} - \frac{16\widetilde{b}}{9}\right)\cX\cY^\frac{1}{2}= O(\cX^2).
\end{align*}
where $o(1)$ are explicit expressions in $\widetilde{a}, \widetilde{b},
\widetilde{d}$, which we omit for brevity.

This can be rephrased as 
\begin{align*}
\dfrac{1}{3}\underbrace{\left(\widetilde{d}\cY^{\frac{1}{2}}+\frac{2}{3}\cX+
  \left(-\log(2) + \frac{5\log(3)}6 - 1+o(1)\right)\cY\right)^2}_{ = \delta(\nu)} +&\\
\dfrac{2}{3}\underbrace{\left(\widetilde{b}\cY^{\frac{1}{2}}-\frac{4}{3}\cX+ 
  \left(2 \log(2) - \frac{\log(3)}6 + \frac{1}{2}+o(1)\right)\cY\right)^2}_{ = \beta(\nu)}
-\widetilde{a} \cY &= O(\cX^2).
\end{align*}

We prove now that $\beta(\nu)=O(\cX^2)$ and $\delta(\nu)=O(\cX^2)$. Assume by
contradiction that this does not hold. Then $\widetilde{a}(\nu)$ is
positive asymptotically and it cannot belong to the class $O(\cX^2)$. This
would contradict our upper bound for the minimum given in
Lemma~\ref{lem:existence_first_terms}.

Consequently, $\beta(\nu)$ and $\delta(\nu)$ belong to $O(\cX^2)$ and then so
does $\widetilde{a} \cY$. This means that 
$\widetilde{a} = O(\cX^2\cY^{-1})$ and $\widetilde{b}, \widetilde{d} =
O(\cX\cY^{-\frac{1}{2}})$. As usual we call $\overline{a}, \overline{b},
\overline{d}$ the functions $\widetilde{a}\cX^{-2}\cY,
\widetilde{b}\cX^{-1} \cY^{\frac{1}{2}}$ and
$\widetilde{d}\cX^{-1}\cY^{\frac{1}{2}}$ bounded at $\infty$, and we substitute in Eq.~\eqref{eq:main_constraint} to get

\begin{equation}\label{eq:constraint_coeffs}\left(-\overline a-\frac49\right)+
  \frac23\left(\overline b-\frac43\right)^2+
  \frac13\left(\overline d+\frac23\right)^2=o(1).\end{equation}

Lemma~\ref{lem:existence_first_terms} ensures that we must have $\overline{a}
\leq -4/9+o(1)$, otherwise it would contradict the minimality of $a$. The
equation above ensures that we must have $\overline{a} \geq -4/9+o(1)$ as well.
So $\overline{a} = -4/9+o(1)$, which implies that $\overline{b} = 4/3+o(1)$ and
$\overline{d} = -2/3+o(1)$. This third step of this proof gives:
\begin{align*}
a &= (8/9)^{1/3}\nu^{1/3}(\log\nu)^{2/3}(1+ 4 \cX/3+ (-2\log 2 +
\log 3/6 -2)\cY-4\cX^2/9 +\widetilde{a}\cX^2),\\
b &= (8/9)^{1/3}\nu^{1/3}(\log\nu)^{2/3}(1+4\cX/3+\widetilde{b}\cX),\\
d &= (3\nu/\log\nu)^{1/3}(1-2\cX/3+\widetilde{d}\cX),
\end{align*} for some fresh unknown functions $\widetilde{a}, \widetilde{b}, \widetilde{d}= o(1)$.

\medskip

\textit{Step 4}: Substituting the asymptotic expansion obtained in Step~3
yields 
\[\left(-\widetilde{a} +
\frac{2\widetilde{b}{\mathstrut}^{\>2}}3 +
\frac{\widetilde{d}{\mathstrut}^{\>2}}3 \right)\cdot\cX^2 = O(\cX\cY).\] 

As in Step~1, $\widetilde{a}$ must belong to $O(\cX^{-1}\cY)$ and
$\widetilde{b}, \widetilde{d}$ to $O(\cX^{-\frac{1}{2}}\cY^{\frac{1}{2}})$ so
as not to contradict Lemma~\ref{lem:existence_first_terms}. Using the notations
$\overline{a}, \overline{b}, \overline{d}$ for the asymptotically bounded
functions $\widetilde{a}\cX\cY^{-1},
\widetilde{b}\cX^{\frac{1}{2}}\cY^{-\frac{1}{2}}$ and
$\widetilde{d}\cX^{\frac{1}{2}}\cY^{-\frac{1}{2}}$, we find by substitution in
Eq.~\eqref{eq:main_constraint} that $\overline{b} = o(1)$, $ \overline{d}=o(1)$ and
$\overline{a} = 4 \log 2/3  - \log 3/9 +4 +o(1)$, \emph{i.e.},
\begin{align*}
a =& (8/9)^{1/3}\nu^{1/3}(\log\nu)^{2/3}(1+ 4\cX/3+ (-2\log 2 + \log 3/6 -2)\cY \\
& -4\cX^2/9+(4 \log 2/3 - \log 3/9
+4)\cX\cY+\widetilde{a}\cX\cY),\\
b =& (8/9)^{1/3}\nu^{1/3}(\log\nu)^{2/3}(1+4
\cX/3+\widetilde{b}\cX^{\frac{1}{2}} \cY^{\frac{1}{2}}),\\
d   =&
(3\nu/\log\nu)^{1/3}(1-2\cX/3+\widetilde{d}\cX^{\frac{1}{2}} \cY^{\frac{1}{2}}),
\end{align*}
for some fresh unknown functions $\widetilde{a}, \widetilde{b}, \widetilde{d}= o(1)$.

\medskip

\textit{Step 5}: We expand the constraint for the last time and after a
factorization
that follows the pattern of Step 3 we get:
\begin{align*}
  \frac13\left(\widetilde{d}\cX^{\frac{1}{2}}+\left(-\log 2 + \frac{5\log 3}6 -
  1\right)\cY^{\frac{1}{2}}\right)^2\cY + &\\
  \frac23\left(\widetilde{b}\cX^{\frac{1}{2}} +
  \left(2\log 2 - \frac{\log 3}6 + \frac12\right)\cY^{\frac{1}{2}}\right)^2\cY -
  \widetilde{a}\cX\cY &= O(\cY^2).
\end{align*}

Again, we must have $\widetilde{a} = O(\cX^{-1} \cY)$ and $\widetilde{b}, \widetilde{d} =
O(\cX^{-\frac{1}{2}}\cY^{\frac{1}{2}})$, so as not to contradict
Lemma~\ref{lem:existence_first_terms}. We also compute the associated limits
for $\widetilde b$ and $\widetilde d$
by using the same method as in the previous steps: we let $\overline a$,
$\overline b$, $\overline d$ denote the bounded functions $\widetilde
a\cX\cY^{-1}$, $\widetilde{b}\cX^{\frac{1}{2}}\cY^{-\frac{1}{2}}$, 
$\widetilde{d}\cX^{\frac{1}{2}}\cY^{-\frac{1}{2}}$.
Direct computations yield
\begin{align*}\left[\frac 23\left(\overline b+2\log 2-\frac{\log 3}6+\frac
  12\right)^2 -\frac 32\right]+ \frac 13\left(\overline d+\frac{5\log
3}6-\log 2-1\right)^2=\\\left(\overline a-\left(-(\log 2)^2 + \frac{\log 2\log 3}6 -
\frac{7 (\log 3)^2}36 - 6\log 2 + \frac{\log 3}2 - 5\right)\right) +
o(1).\end{align*}

Lemma~\ref{lem:existence_first_terms} ensures that $\overline
b\leq a_{01}+o(1)$ and $\overline a\leq a_{02}+o(1)$.
This implies that the lefthand side of the equality is bounded below by some
function in $o(1)$ and hence 
$\lim\overline a = a_{02}$. Therefore we get that
\begin{align*}
\lim \widetilde{b}\cX^{\frac{1}{2}}\cY^{-\frac{1}{2}} &= -2\log 2 + \log 3/6 - 2,\\
\lim \widetilde{d}\cX^{\frac{1}{2}}\cY^{-\frac{1}{2}} &= \log 2 - 5\log 3/6 + 1,
\end{align*}
which concludes the proof.
\end{proof}

\begin{corollary} \label{coro:NFS_two_terms} Let
    $\Phi(\nu)=\max(a(\nu),b(\nu))$ be the quantity minimized by
Problem~\ref{pb:optim2}. The heuristic complexity
$C(N)=\exp(2\Phi(\log N))$ to factor an integer $N$ with NFS satisfies:  
\[\log C(N) = \sqrt[3]{\frac{64}{9}}(\log N)^{1/3} (\logTwo N)^{1/3}\left(1+a_{10} \frac{\logThree
N}{\logTwo N} +  \frac{a_{01}}{\logTwo N} +o\left(\frac{1}{\logTwo
N}\right)\right)\] where $a_{10} = 4/3$ and $a_{01} = -2\log 2 +\log 3/6-2$.
\end{corollary}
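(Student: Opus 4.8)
The plan is to derive the statement almost verbatim from Theorem~\ref{th:first_terms}, which already controls the minimizers $a$ and $b$ with exactly the precision required here. First I would note that $\Phi(\nu)=\max(a(\nu),b(\nu))$ is well defined: minimizers of Problem~\ref{pb:optim2} exist by Lemma~\ref{lem:existence_first_terms} together with the minimality comparisons carried out in the proof of Theorem~\ref{th:first_terms}, and for any such minimizer that theorem yields, writing $\mu(\nu)=(8/9)^{1/3}\nu^{1/3}(\log\nu)^{2/3}$,
\[a(\nu)=\mu(\nu)\bigl(1+a_{10}\cX(\nu)+a_{01}\cY(\nu)+r_a(\nu)\bigr),\qquad b(\nu)=\mu(\nu)\bigl(1+a_{10}\cX(\nu)+a_{01}\cY(\nu)+r_b(\nu)\bigr),\]
with $r_a(\nu),r_b(\nu)=o(\cY(\nu))$ and $a_{10}=4/3$, $a_{01}=-2\log 2+\log 3/6-2$.

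The next step is to check that passing to the maximum does not degrade the expansion. Since $\mu(\nu)>0$ for $\nu$ large, $\Phi(\nu)=\mu(\nu)\bigl(1+a_{10}\cX(\nu)+a_{01}\cY(\nu)+\max(r_a(\nu),r_b(\nu))\bigr)$, and $\max(r_a,r_b)=o(\cY)$ as well, so
\[\Phi(\nu)=(8/9)^{1/3}\nu^{1/3}(\log\nu)^{2/3}\bigl(1+a_{10}\cX(\nu)+a_{01}\cY(\nu)+o(\cY(\nu))\bigr).\]
This is the one place where I genuinely use Theorem~\ref{th:first_terms} rather than merely the fact that the two displayed leading parts of $a$ and $b$ coincide: it is the $o(\cY(\nu))$ bound on the remainders that lets the (a priori non-zero) gap between $a$ and $b$ be absorbed into the error term.

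Finally I would specialize $\nu=\log N$ and translate into the notation of the statement. From $C(N)=\exp(2\Phi(\log N))$ we get $\log C(N)=2\Phi(\log N)$; the factor $2$ merges with the leading constant via $2\,(8/9)^{1/3}=(8\cdot 8/9)^{1/3}=(64/9)^{1/3}=\sqrt[3]{64/9}$. Moreover $\nu^{1/3}=(\log N)^{1/3}$, $\log\nu=\logTwo N$, and by definition of $\cX$ and $\cY$ one has $\cX(\log N)=\logThree N/\logTwo N$ and $\cY(\log N)=1/\logTwo N$. Substituting these identities into the expansion of $\Phi$ produces the asserted formula for $\log C(N)$, with $a_{10}$ and $a_{01}$ as above.

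I do not expect a real obstacle here: the corollary is a bookkeeping consequence of Theorem~\ref{th:first_terms}. The only points that need a line of justification are the two just highlighted, namely that $\max$ preserves the asymptotic expansion (thanks to the $o(\cY)$ control on the remainders, not merely the coincidence of the displayed terms of $a$ and $b$), and the elementary constant identity $2\,(8/9)^{1/3}=\sqrt[3]{64/9}$; both are immediate.
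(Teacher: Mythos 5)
Your proposal is correct and is precisely the bookkeeping the paper leaves implicit: the corollary is stated without a separate proof, being read off from Theorem~\ref{th:first_terms} exactly as you do (identical expansions of $a$ and $b$ up to $o(\cY)$, hence of their maximum, then $2\,(8/9)^{1/3}=\sqrt[3]{64/9}$ and the substitutions $\nu=\log N$, $\cX(\log N)=\logThree N/\logTwo N$, $\cY(\log N)=1/\logTwo N$). Note that your derivation yields the factor $(\logTwo N)^{2/3}$, consistent with Formula~\eqref{eq:nfs-cplx-1}, so the exponent $1/3$ on $\logTwo N$ in the corollary's display is a typo in the paper rather than a defect of your argument.
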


A natural question to ask is whether the process used in the proof of Thm.~\ref{th:first_terms} can be continued. As we will see in the next section, the answer to this question is yes. 

\subsection{Further terms in the asymptotic expansion of the complexity of NFS}
\label{sec:coeffs_expansion}

In this section, we describe how the arguments of
Section~\ref{sec:first_terms_dev} can be turned into three algorithms that
allow to compute more precise asymptotic expansions for the minimizers $a,b,d$. These
three algorithms take as input the precision requested for the asymptotic
expansion of the complexity and they mirror the three steps used in the proof of
Theorem~\ref{th:first_terms}: 
\begin{itemize}
\item Algorithm \textsc{GuessTerms} guesses the asymptotic expansion of the
  minimizers of Problem~\ref{pb:optim2}.
\item Algorithm \textsc{ProveExistence} proves the existence of functions
  satisfying the constraint in Problem~\ref{pb:optim2}, and whose asymptotic
  expansion is the output of Algorithm~\textsc{GuessTerms}. This
  establishes an upper bound for the minimum of the objective function in Problem~\ref{pb:optim2}.
\item Algorithm \textsc{ProveMinimality} proves that the asymptotic
  expansion of the minimizers of Problem~\ref{pb:optim2} must be the
  output of Algorithm \textsc{GuessTerms}.
\end{itemize}

These algorithms are used in the following way. We set a degree $n>1$. The three
algorithms are used to prove that $\xi(N) = Q(\cX(\log N), \cY(\log N)) +
o(\cY(\log N)^n)$
where $Q$ is a bivariate polynomial of total degree $n$, whose
coefficients
are computed along the way. We emphasize that these algorithms might
fail, \emph{i.e.}, become unable to compute or prove new terms at some point.
We were unable to prove that our algorithms never fail, but experimentally they
never did. And even if one of them does, the terms of the complexity computed
up until the failure point are guaranteed to be correct. We also point out that
all the bivariate polynomials that we consider in these algorithms have
coefficients in $\mathbb Q(\log 2, \log 3)$, so they can be described exactly
without having to rely on floating-point computations.

\medskip

We now describe
more precisely the algorithms \textsc{GuessTerms},
\textsc{ProveExistence} and \textsc{ProveMinimality} assuming there is no
failure.

\medskip

Algorithm \textsc{GuessTerms} assumes that $a,b,d$ belong respectively to the
classes $\cC^{[1/3,2/3]}, \cC^{[1/3,2/3]}, \cC^{[1/3,-1/3]}$ and even that
$a=b$. We call $\mathbf A, \mathbf B, \mathbf D$ the bivariate series
associated to $a,b,d$. First, we expand
Eq.~\eqref{eq:main_constraint} based on the asymptotic
expansions of $a,b$ and $d$ (initialized thanks to the result of
Theorem~\ref{th:first_terms}). Then we minimize the leading term of the
asymptotic expansion of the objective function in Problem~\ref{pb:optim2} under
the constraint that the coefficient of the main term in the asymptotic
expansion of Eq.~\eqref{eq:main_constraint} must vanish. This is done thanks to
Taylor series expansions at infinity and bivariate series computations at finite precision. From this, we deduce the
next coefficients of our series. The repetition of this process provides an algorithm
to iteratively compute the coefficients of the series $\mathbf A, \mathbf B,
\mathbf D$. These series will be our candidates for the asymptotic
expansion
of the minimizers of Problem~\ref{pb:optim2}.

\medskip

Algorithm \textsc{ProveExistence} is the counterpart of Lemma~\ref{lem:existence_first_terms}. It checks that there exist functions $\alpha,\beta,\delta$ such that:
\begin{itemize}
  \item The asymptotic expansion of $\alpha$ is $\alpha = \mathbf{A}^{(n+1)}(\cX, \cY) +
    o(\cY^{n+1})$;
  \item The asymptotic expansion of $\beta$ is $\beta = \mathbf{B}^{(n+1)}(\cX,\cY) +
    o(\cY^{n+1})$;
  \item The asymptotic expansion of $\delta$ is $\delta = \mathbf{D}^{(n+1)/2}(\cX,\cY) +
    o(\cY^{(n+1)/2})$;
  \item The functions $\alpha,\beta,\delta$ satisfy the constraint~\eqref{eq:main_constraint}.
\end{itemize}

Algorithm \textsc{ProveExistence} serves the same purpose as
Lemma~\ref{lem:existence_first_terms}: Give a baseline result that ensures
the existence of functions that satisfy the constraint~\eqref{eq:main_constraint}
and whose asymptotic expansions are known up to a given degree. In particular, a
solution to Problem~\ref{pb:optim2} must be smaller than $\alpha, \beta,
\delta$. The algorithm works similarly to the proof of
Lemma~\ref{lem:existence_first_terms} by setting three functions $a,b,d$ where
$a$ depends on an unknown function $\widetilde a$:
\begin{align*}
  a &= (8/9)^{1/3}\nu^{1/3}(\log\nu)^{2/3}(\mathbf A^{(n+1)}(\cX, \cY) +
  \widetilde{a}\cX^{n+2});\\
b &= (8/9)^{1/3}\nu^{1/3}(\log\nu)^{2/3}\mathbf B^{(n+1)}(\cX,\cY);\\
d &= (3\nu/\log\nu)^{1/3}\mathbf D^{(n+1)/2}(\cX, \cY).
\end{align*} 

Then the algorithm checks by using Taylor series expansions that the constraint of
Problem~\ref{pb:optim2} instantiated with these functions
can be rewritten as $(\widetilde a-\kappa)=\varepsilon(\widetilde a, \nu)$ for
some $\kappa\in\mathbb R$ and $\varepsilon$ a function as in the proof of
Lemma~\ref{lem:existence_first_terms}.

\medskip

Algorithm \textsc{ProveMinimality} is the counterpart of the proof of
Theorem~\ref{th:first_terms} and it follows its roadmap. We proceed iteratively
with the terms of the asymptotic expansion, computing and proving the
expansions of the minimizers $a,b,d$ up to degree~$n$. To do so, we use the
current (proven) knowledge of the asymptotic expansions of $a,b,d$, initialized
thanks to Theorem~\ref{th:first_terms}, and expand the constraint by computing
Taylor series expansions at infinity.

To prove that the next terms of the series guessed by Algorithm~\textsc{GuessTerms} are correct, we proceed as follows.
First, 
we
prove that the remainder of the series expansion, which has form
$o(\cX^{\frac i2}\cY^{\frac j2})$, is in fact of the form
$O(\cX^{\frac{i'}2}\cY^{\frac{j'}2})$, where $(i', j')$ is strictly larger than
$(i,j)$ for the graded lexicographical ordering, so that
$O(\cX^{\frac{i'}2}\cY^{\frac{j'}2})$ is a proper subset of $o(\cX^{\frac
i2}\cY^{\frac j2})$. This first step works as long as the equation derived from the constraint follows certain patterns, which are given in
Proposition~\ref{prop:patterns} below.

In a second step, we prove that this remainder has in fact the form
$\kappa\cX^{\frac{i'}2}\cY^{\frac{j'}2}(1+o(1))$, where $\kappa$ is the
corresponding coefficient in the guessed series. This verification involves an equation that is similar to Eq.~\eqref{eq:constraint_coeffs} in the proof of
Theorem~\ref{th:first_terms}. Again, the shape of the equation encountered is crucial to proceed.

Based on our experiments, we surmise that only three patterns occur in the first step, and that the equation encountered in the second step always matches the shape of Eq.~\eqref{eq:constraint_coeffs}. Algorithm  \textsc{ProveMinimality} only consists in verifying that this holds.

\begin{proposition}
  \label{prop:patterns}
  Let $\widetilde a, \widetilde b, \widetilde d = o(1)$ be three functions.
  \begin{itemize}
    \item \textbf{Pattern (P1)}: Let $i> 0, j\geq 0$. Assume that
      $\widetilde a(\nu) \leq a_0(\nu)$, $\widetilde b(\nu)\leq b_0(\nu)$, where $a_0, b_0 =
      O(\cX^{-1}\cY)$. Assume further that 
      \[\left( \frac{{\widetilde d}{\mathstrut}^{\>2}}{3} - \widetilde a - 2\widetilde b \right)\cX^i\cY^j =
      O(\cX^{i-1}\cY^{j+1}).\]
      Then $\widetilde a, \widetilde b=O(\cX^{-1}\cY)$ and $\widetilde
      d=O(\cX^{-\frac 12}\cY^{\frac 12})$.
    \item \textbf{Pattern (P2)}: Let $i>1$ and $\kappa\in\mathbb R$. Assume that
      $\widetilde a(\nu)\leq a_0(\nu)$, $\widetilde b(\nu)\leq b_0(\nu)$, where $a_0, b_0 =
      O(\cX^{i+1}\cY^{-i})$. Assume further that 
      \[\left(\frac{{\widetilde d}{\mathstrut}^{\>2}}{3} - \widetilde a - 2\widetilde b(1+o(1))\right)\cY^i +
      \kappa\widetilde d \cX^{\frac{i+1}2}\cY^{\frac i2}(1+o(1)) =
      O(\cX^{i+1}).\]
      Then $\widetilde a = O(\cX^{i+1}\cY^{-i}), \widetilde b \in
          O(\cX^{i+1}\cY^{-i})\subset
          O(\cY^{\frac 12})$, and $\widetilde d = O(\cX^{\frac{i+1}2}\cY^{-\frac i2})$.
      \item \textbf{Pattern (P3)}: Let $i>2$. Assume that
      $\widetilde a(\nu) \leq a_0(\nu)$, $\widetilde b(\nu)\leq b_0(\nu)$, where $a_0 = O(\cX^{-1}\cY)$
      and $b_0 =O(\cX^{i-1}\cY^{-i+\frac{3}{2}})$. Assume further that 
      \[\left(\frac{{\widetilde d}{\mathstrut}^{\>2}}{3} - \widetilde a \right)\cX^i -2\widetilde
      b\cY^{i-\frac{1}{2}} = O( \cX^{i-1}\cY ).\]
      Then $\widetilde a = O(\cX^{-1}\cY)$ and $\widetilde b
      =O(\cX^{i-1}\cY^{-i+\frac{3}{2}})$, and $\widetilde d =
      O(\cX^{\frac 12}\cY^{-\frac 12})$.
  \end{itemize}
\end{proposition}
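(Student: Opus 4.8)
All three patterns are instances of a single ``sandwich'' argument, run exactly as in Steps~1--5 of the proof of Theorem~\ref{th:first_terms}. The plan is, in each case, to divide the given asymptotic identity through by the monomial standing in front of its left-hand side, so that the right-hand side lands in a fixed \emph{target class} $\cT$: for (P1) and (P3) this is $O(\cX^{-1}\cY)$ (in (P3), the term carrying $\widetilde b$ then sits at the coarser scale $O(\cX^{i-1}\cY^{-i+3/2})$, which is what $b_0$ controls), and for (P2) it is $O(\cX^{i+1}\cY^{-i})$. Throughout we use only that $\cX,\cY\to 0$ with $\cY=o(\cX)$ and that $\cX^{k}=o(\cY^{k-1/2})$ for every fixed $k\ge 1$.

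The first step is to isolate a perfect square in $\widetilde d$. For (P1) and (P3) this is free, since $\widetilde d$ enters only through $\widetilde d^2/3\ge 0$; for (P2) we complete the square against the cross term, so that after dividing by $\cY^i$ the expression $\tfrac13\widetilde d^2+\kappa(1+o(1))\widetilde d\,\cX^{(i+1)/2}\cY^{-i/2}$ becomes $\tfrac13\bigl(\widetilde d+\tfrac32\kappa(1+o(1))\cX^{(i+1)/2}\cY^{-i/2}\bigr)^2$ up to an additive term of order $\cX^{i+1}\cY^{-i}$, which is absorbed into $\cT$.

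The second step is the sandwich itself. Dropping the nonnegative square from the rearranged identity produces a lower bound $\widetilde a+2\widetilde b(1+o(1))\ge -g$ with $g\in\cT$, $g\ge 0$; combined with the hypotheses $\widetilde a\le a_0\in\cT$ and $\widetilde b\le b_0$, this forces both $\widetilde a$ and $\widetilde b$ into their asserted classes, the factors $1+o(1)$ being harmless since they are eventually confined to $[\tfrac12,2]$. Pattern (P3) needs one preliminary move: because the term $-2\widetilde b\,\cY^{i-1/2}$ sits at a coarser scale than $(\widetilde d^2/3-\widetilde a)\cX^i$, one first bounds it above via $\widetilde b\le b_0$ and bounds $\widetilde d^2/3-\widetilde a$ below via $\widetilde d^2\ge 0$ and $\widetilde a\le a_0=O(\cX^{-1}\cY)$; the identity then pins $\widetilde d^2/3-\widetilde a$ to $O(\cX^{-1}\cY)$ and the $\widetilde b$ term to $O(\cX^{i-1}\cY)$, whence $\widetilde b=O(\cX^{i-1}\cY^{-i+3/2})$. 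Finally, feeding the resulting bounds on $\widetilde a$ and $\widetilde b$ back into the identity shows that the perfect square itself lies in $\cT$; a square root --- and, for (P2), undoing the completion of the square --- gives the stated bound on $\widetilde d$. The inclusion $O(\cX^{i+1}\cY^{-i})\subseteq O(\cY^{1/2})$ invoked in (P2) is exactly the case $k=i+1$ of the growth fact above.

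The only genuine difficulty is the monomial bookkeeping: checking, under the stated hypothesis on $i$, that every term to be absorbed really does land in $\cT$, that the scale comparisons point the right way, and that the $o(1)$ and $O(\cdot)$ error terms are handled uniformly so that the constants appearing in the sandwich stay genuinely constant. No single comparison is deep, but there are several of them, and the scale mismatch in (P3) is the one spot where the argument is a little more than mechanical. This is precisely the verification performed, degree by degree, by Algorithm~\textsc{ProveMinimality}.
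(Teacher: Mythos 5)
Your sketch is correct and follows essentially the same route as the paper's proof: divide by the leading monomial, use the nonnegativity of the square in $\widetilde d$ (completing the square in (P2)) to sandwich $\widetilde a$ and $\widetilde b$ between the hypothesized upper bounds and the lower bound coming from the identity, then feed these back to bound $\widetilde d$. The paper writes this argument out only for (P1) and declares the other cases ``very similar,'' so your explicit handling of the cross term in (P2) and of the scale mismatch in (P3) (which in fact yields a slightly stronger bound on $\widetilde d$ than the one stated) is just a fuller elaboration of the same method.
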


\begin{proof} 
    The proof in all three cases is very similar. We only prove 
  Pattern \textbf{(P1)} as an example.

  Dividing the equality by $\cX^i\cY^j$, we get
  \[\widetilde a+2\widetilde b =\frac{{\widetilde d}{\mathstrut}^{\>2}}3+O(\cX^{-1}\cY),\]
  which shows that $\widetilde a+2\widetilde b$ is bounded below
  by a function in $O(\cX^{-1}\cY)$.
    Since $\widetilde a + 2\widetilde b \leq a_0+2b_0 =
  O(\cX^{-1}\cY)$, we deduce that $\widetilde a+2\widetilde b =
  O(\cX^{-1}\cY)$. Writing $\overline a, \overline b$ for $\widetilde
  a\cX\cY^{-1}, \widetilde b\cX\cY^{-1}$ respectively, we get that $\overline
  a+2\overline b = O(1)$. Together with the fact that $\overline a,
  \overline b$ are bounded above by $\limsup a_0\cX\cY^{-1}, \limsup
  b_0\cX\cY^{-1}$ respectively, this implies that $\overline a, \overline b\in
  O(1)$, and hence that $\widetilde a,\widetilde b \in O(\cX^{-1}\cY)$.
  Hence ${\widetilde d}{\mathstrut}^{\>2}$ must also belong to
  $O(\cX^{-1}\cY)$.
 
\end{proof}

In fact these three patterns do not appear at random during the proof: They appear according to the shape of the remainder $o(\cX^i\cY^j)$ in the series expansion of $a$ we are currently considering. When $i \neq 0$ and $j \neq 0$ the equation follows pattern \textbf{(P1)}, when $i = 0$ it follows pattern \textbf{(P2)} and when $j = 0$ it follow pattern \textbf{(P3)}. In particular, pattern \textbf{(P3)} is always encountered one step after pattern \textbf{(P2)} has been encountered. We want to consider that $\widetilde{b}$ is in $O(\cY^{\frac{1}{2}})$ instead of the tighter class $O(\cX^{i+1} \cY^{-i})$ at the end of pattern \textbf{(P2)} precisely to ensure that the next step will yield an equation that follows pattern \textbf{(P3)}.

The three algorithms \textsc{GuessTerms}, \textsc{ProveExistence} and
\textsc{ProveMinimality} are used consecutively in order to compute a
proven
asymptotic expansion of $a,b,d$ at a given precision $n$. 
We have implemented a function~\textsc{ComputeProvenExpansion} which performs
this process: It takes as input an integer $n>1$ and --- if it does not fail
--- it returns two bivariate
polynomials $\mathbf A^{(n+1)}, \mathbf D^{(n+1)/2}$ of respective degrees $n+1$ and $(n+1)/2$
such that the minimizers $a,b,d$ of Problem~\ref{pb:optim2} satisfy
\begin{align*}
  a &= (8/9)^{1/3}\nu^{1/3}(\log\nu)^{2/3}(\mathbf A^{(n+1)}(\cX, \cY)+ o(\cY^{n+1})),\\
  b &= (8/9)^{1/3}\nu^{1/3}(\log\nu)^{2/3}(\mathbf A^{(n)}(\cX, \cY)+ o(\cY^{n})),\\
  d &= (3\nu/\log\nu)^{1/3}(\mathbf D^{(n+1)/2}(\cX, \cY)+ o(\cY^{(n+1)/2})).
\end{align*}

The expansion for $b$ is proven up to a degree one less than the one for $a$. The underlying reasons are the same as in the proof of Theorem~\ref{th:first_terms}, where this behavior was first encountered.

This implies that the heuristic complexity of NFS is bounded above by
\[\exp\left[\sqrt[3]{\frac{64}9}(\log N)^{1/3}(\log_2 N)^{
2/3}\left(\mathbf A^{(n)}\left(\frac{\log_3 N}{\log_2 N}, \frac{1}{\log_2
N}\right)+ o\left(\frac
1{(\log_2 N)^n}\right)\right)\right].\]

\section{Experimental results}\label{sec:expe}
In this section, we report on experimental results obtained with our
implementation of the algorithms described in
Section~\ref{sec:coeffs_expansion}. Our implementation is available in the code
repository mentioned in the introduction of this article.

These algorithms provide an asymptotic expansion of the heuristic complexity $C(N)$ of NFS. More precisely, they output
coefficients of a bivariate series $\mathbf A\in \mathbb Q(\log 2, \log 3)[[X,
Y]]$ such that for all $n\geq 0$ such as our algorithms do not fail, we have:
\[\log C(N) = \sqrt[3]{\frac{64}9}(\log N)^{1/3}(\logTwo N)^{2/3}\left(\mathbf
A^{(n)}\left(
\dfrac{\logThree N}{\logTwo N}, \dfrac{1}{\logTwo
N}\right)+o\left(\dfrac{1}{(\logTwo N)^n}\right)\right).\]

Here are the first coefficients of the series $\mathbf A(X,Y) = \sum_{i,j\geq
0} a_{ij} X^i Y^j$ obtained via this implementation:

\[\begin{array}{c|l}
  a_{00} & 1\\
  a_{10} & 4/3\\
  a_{01} & -2\log 2+\log 3/6-2\\
  a_{20} & -4/9\\
  a_{11} & 4\log 2/3 - \log 3/9 + 4\\
  a_{02} & - (\log 2)^2 + \log 2\log 3/6 - 7(\log 3)^2/36- 6\log 2  + \log 3/2 - 5\\
  a_{30} & 32/81\\
  a_{21} & - 16\log 2/9 + 4\log 3/27 - 56/9\\
  a_{12} &8(\log 2)^2/3 - 4\log 2\log 3/9 + 56\log 2/3 + 14(\log 3)^2/27 -
  14\log 3/9 + 64/3\\
  a_{03} & - 4(\log 2)^3/3 + (\log 2)^2\log 3/3 - 14(\log 2)^2 - 
      7\log 2(\log 3)^2/9 \\
          & + 7\log 2\log 3/3 - 32\log 2 + 41
          (\log 3)^3/648 -
      49(\log 3)^2/18 + 8\log 3/3 -
      85/3
\end{array}\]

The
function $\xi$ in the introduction can be approximated
asymptotically by evaluating at $(\log_3 N/\log_2 N, 1/\log_2 N)$ the truncations of the series $\mathbf A(X,Y)-1$.
Using the algorithms in Section~\ref{sec:coeffs_expansion}, we were able to compute the series $\mathbf{A},
\mathbf{B}=\mathbf A$, and $\mathbf{D}$  
up to degree $14$ (more than a hundred terms). The fact that $\mathbf{A} = \mathbf{B}$ has been verified so far backs the claim that the patterns encountered while proving minimality are always as expected. Moreover, despite the fact the algorithms $\textsc{ProveExistence}$ and $\textsc{ProveMinimality}$ regularly have to consider terms $\cX^i \cY^j$ with $i$ or $j$ in $\frac{1}{2}\mathbb{Z}$ in the expansions of $a,b,d$, the coefficients of these terms always turned out to be zero in our experiments. These remarks allow us to formulate the following conjecture:

\begin{conjecture}\label{conj:general_shape}
  The minimizers $a,b,d$ of Problem~\ref{pb:optim2} belong respectively to the classes of
  functions $\cC^{[1/3,2/3]}$, $\cC^{[1/3,2/3]}$, $\cC^{[1/3,-1/3]}$. Moreover, the series
  $\mathbf A, \mathbf B\in\mathbb R[[X,Y]]$ associated to $a,b$ are equal.
\end{conjecture}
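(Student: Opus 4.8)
We outline a possible route to a proof of Conjecture~\ref{conj:general_shape}; the fact that it is stated as a conjecture reflects that one step seems out of reach. The plan is to promote the finite iteration in the proof of Theorem~\ref{th:first_terms} to an unbounded one, i.e.\ to show that the algorithms \textsc{GuessTerms}, \textsc{ProveExistence} and \textsc{ProveMinimality} never fail, that the coefficients of the genuinely half-integer monomials $\cX^{i/2}\cY^{j/2}$ they track always vanish, and that the proven expansions of $a$ and $b$ coincide. I would organize this as a single strong induction on a precision parameter $n$ whose hypothesis bundles: (i) $a=(8/9)^{1/3}\nu^{1/3}(\log\nu)^{2/3}(\bfA^{(n)}(\cX,\cY)+o(\cY^n))$, and similarly for $d$ and for $b$ one degree behind, with $\bfA,\bfD\in\bQ(\log2,\log3)[[X,Y]]$; (ii) $\bfA^{(n-1)}=\bfB^{(n-1)}$; and (iii) a structural claim that substituting these truncations into Eq.~\eqref{eq:main_constraint} yields an equation matching one of patterns \textbf{(P1)}--\textbf{(P3)} of Proposition~\ref{prop:patterns}, followed by a ``sum of two weighted squares equals a linear form plus $o(1)$'' equation of the exact shape of Eq.~\eqref{eq:constraint_coeffs}. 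Granting (iii), the induction step is the two-stage argument already used for Theorem~\ref{th:first_terms}: the pattern shrinks the error class, \textsc{ProveExistence} (the analogue of Lemma~\ref{lem:existence_first_terms}) supplies the upper bound, minimality then pins the new coefficients down uniquely, and the fact that the quadratic part is the genuine positive combination $\frac23(\overline b+\cdots)^2+\frac13(\overline d+\cdots)^2$ is exactly what forces $\overline b,\overline d$ to their limits and $\overline a$ to be minimal.

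Hence everything reduces to (iii), which I would attack symbolically. Writing the two size ratios $u_0=(a+\nu/d)/b$ and $u_1=(da+\nu/d)/b$, use Proposition~\ref{prop:stability} to express $\cX(u_i),\cY(u_i)$ as elements of $\bQ(\log2,\log3)[[X,Y]]$ in terms of $\bfA,\bfD$, feed them into the series $\bfQ$ of Proposition~\ref{prop:dev_integral_s} through Corollary~\ref{coro:dev_rho_as_smoothness_proba}, and collect; the result is a single formal identity $\mathbf G(\bfA,\bfD)=0$ in $\bQ(\log2,\log3)[[X,Y]]$. Two features of $\mathbf G$ would have to be proven: first, that the part of the next-order equation quadratic in the new $b$ and $d$ coefficients is always a genuine positive combination $\frac23(\overline b+\cdots)^2+\frac13(\overline d+\cdots)^2$ --- morally, strict convexity of the NFS cost in the $(b,d)$ directions at the optimum; second, that the coefficient of the new $a$ coefficient is a fixed nonzero constant (namely $-1$ after normalization), so the equation genuinely solves for it. Granting these, a Newton-polygon / valuation bookkeeping in the two variables $\cX,\cY$ --- using $\cY/\cX\to0$ --- should show that the only error exponents that occur are those described after Proposition~\ref{prop:patterns} (pattern \textbf{(P1)} when both exponents are nonzero; \textbf{(P2)} then \textbf{(P3)} when one vanishes), so no further pattern arises. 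The vanishing of the genuinely half-integer coefficients is then automatic, since $\bfA,\bfD$ stay in the integer-exponent ring $\bQ(\log2,\log3)[[X,Y]]$.

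For $\bfA=\bfB$ I would argue from optimality rather than from the constraint alone: show that the KKT/Lagrange stationarity conditions for Problem~\ref{pb:optim2}, expanded in the classes $\cC^{[\cdot,\cdot]}$, admit $\bfA=\bfB$ as their unique formal solution. Since the subgradient of $\max(a,b)$ forces the active constraint $\{a=b\}$, the remaining stationarity equations become triangular in the coefficients; combined with the uniqueness extracted above, this gives $\bfA=\bfB$, and in particular $b\in\cC^{[1/3,2/3]}$ with the \emph{same} series as $a$, closing the ``one step behind'' gap of Theorem~\ref{th:first_terms}. An equivalent, perhaps cleaner, route is a rebalancing argument: given a minimizer, symmetrize $a$ and $b$ without increasing $\max(a,b)$ and check that Eq.~\eqref{eq:main_constraint} is preserved to all orders.

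The main obstacle is the closed-form control of $\mathbf G(\bfA,\bfD)=0$, that is, proving that the list \textbf{(P1)}--\textbf{(P3)} is exhaustive and stable under the recursion. The trouble is that excluding a degenerate step --- a spuriously vanishing leading coefficient, or an error monomial whose exponents fit no pattern --- appears to require simultaneous control of infinitely many coefficients of $\bfQ$ and of the substitutions built from Proposition~\ref{prop:stability}, rather than an estimate verifiable degree by degree; this is precisely where one is forced to stop and rely instead on the computation of more than a hundred terms. A plausible way around it would be to find a direct, non-recursive description of $\mathbf G$ --- for instance through an integral representation of $\bfQ$ in the spirit of Proposition~\ref{prop:dev_integral_s} --- from which the triangular structure is visible at once, but obtaining such a description seems to need a genuinely new idea.
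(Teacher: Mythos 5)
This statement is stated as a conjecture in the paper precisely because nobody, including the authors, can currently prove it: the paper's only support for it is experimental (expansions of $\bfA,\bfB,\bfD$ computed to degree $14$, the constraint always matching one of the patterns of Proposition~\ref{prop:patterns}, the half-integer coefficients always vanishing, and $\bfA=\bfB$ holding in every computed degree). Your proposal does not close this gap; it is essentially a careful restatement of the induction that the algorithms \textsc{GuessTerms}, \textsc{ProveExistence} and \textsc{ProveMinimality} already implement, and the step you label (iii) --- that at every order the expanded constraint falls into one of the patterns \textbf{(P1)}--\textbf{(P3)}, that the leading coefficient in front of the new $a$-coefficient never degenerates, and that the quadratic part is always the positive combination $\tfrac23(\cdot)^2+\tfrac13(\cdot)^2$ as in Eq.~\eqref{eq:constraint_coeffs} --- is exactly the point where the paper itself stops (``We were unable to prove that our algorithms never fail, but experimentally they never did''). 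You acknowledge this honestly, but the consequence is that the proposal proves nothing beyond what Theorem~\ref{th:first_terms} and the finitely many computed degrees already give; the conjecture remains a conjecture, and the ``Newton-polygon / valuation bookkeeping'' you invoke is precisely the uniform-in-$n$ control over $\bfQ$ and the substitutions of Proposition~\ref{prop:stability} that nobody knows how to obtain.

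Two further cautions on the parts you treat as the easier half. First, the claim $\bfA=\bfB$ cannot be extracted from a generic KKT/subgradient argument: Problem~\ref{pb:optim2} is not symmetric in $a$ and $b$ (they enter the constraint~\eqref{eq:main_constraint} through $2a-b$ and through the two ratios $(a+\nu/d)/b$ and $(da+\nu/d)/b$ in quite different ways), so ``the active constraint $\{a=b\}$'' is not forced by the shape of $\max(a,b)$ alone, and a symmetrization/rebalancing step would have to be shown to preserve the nonlinear constraint to all orders --- this is again the content of the conjecture, not a shortcut to it. Indeed the paper's own proven results deliberately leave $b$ one degree behind $a$, reflecting that the minimality argument yields strictly less information about $b$ at each stage. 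Second, the vanishing of the genuinely half-integer coefficients is not ``automatic'' from $\bfA,\bfD$ lying in $\bQ(\log 2,\log 3)[[X,Y]]$: the intermediate error classes $O(\cX^{i/2}\cY^{j/2})$ arise from the square-root structure of the quadratic step, and ruling out nonzero half-integer terms in the minimizers is part of what must be proven, not an artifact of the ring one chooses to work in.
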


For
$i\geq 0$, we
let $\xi_i(N)$ denote the function $\mathbf A^{(i)}(\log_3 N/\log_2 N, 1/\log_2
N)-1$. In particular, for all $i\geq 0$, we have $\xi(N) = \xi_i(N) + o(1/(\log_2 N)^i)$. Figure
\ref{fig:zonecrypto} shows the behavior of $\xi_i$ for
cryptographically relevant values of $N$. Figure \ref{fig:convergence} focuses
on the range where we observe experimentally the convergence of the truncations $\xi_i$.

\begin{figure} 
\begin{center}
 \includegraphics[scale = 0.7]{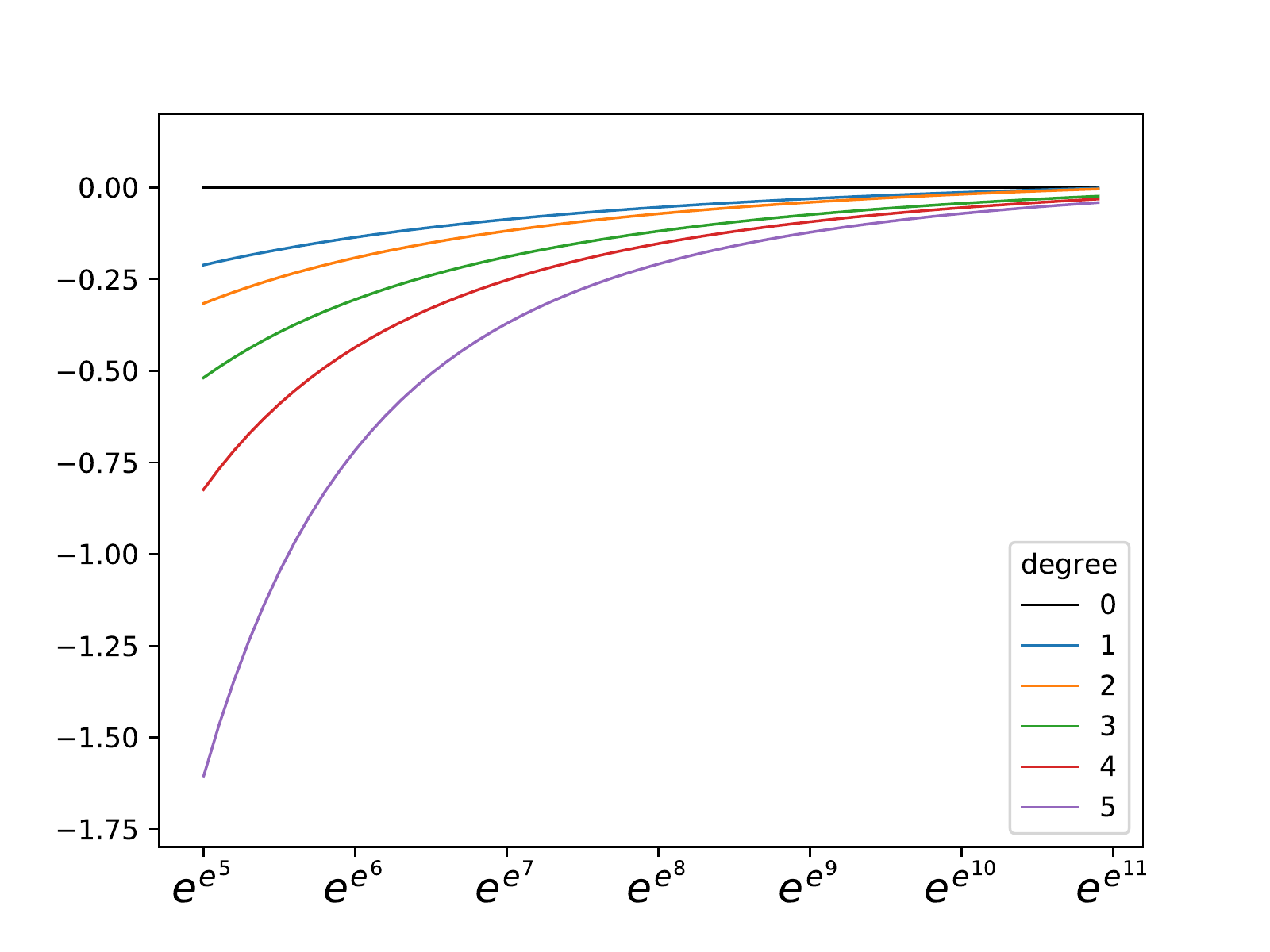}
  \caption{Truncations of $\xi$ up to total degree $i$ for $0\leq i\leq5$
  in function of $N$ for cryptographically relevant values of
  $N$. The abscissa axis is in $\log\relax\log$ scale.}
  \label{fig:zonecrypto}
\end{center}
\end{figure}
\begin{figure} 
\begin{center}
\includegraphics[scale = 0.7]{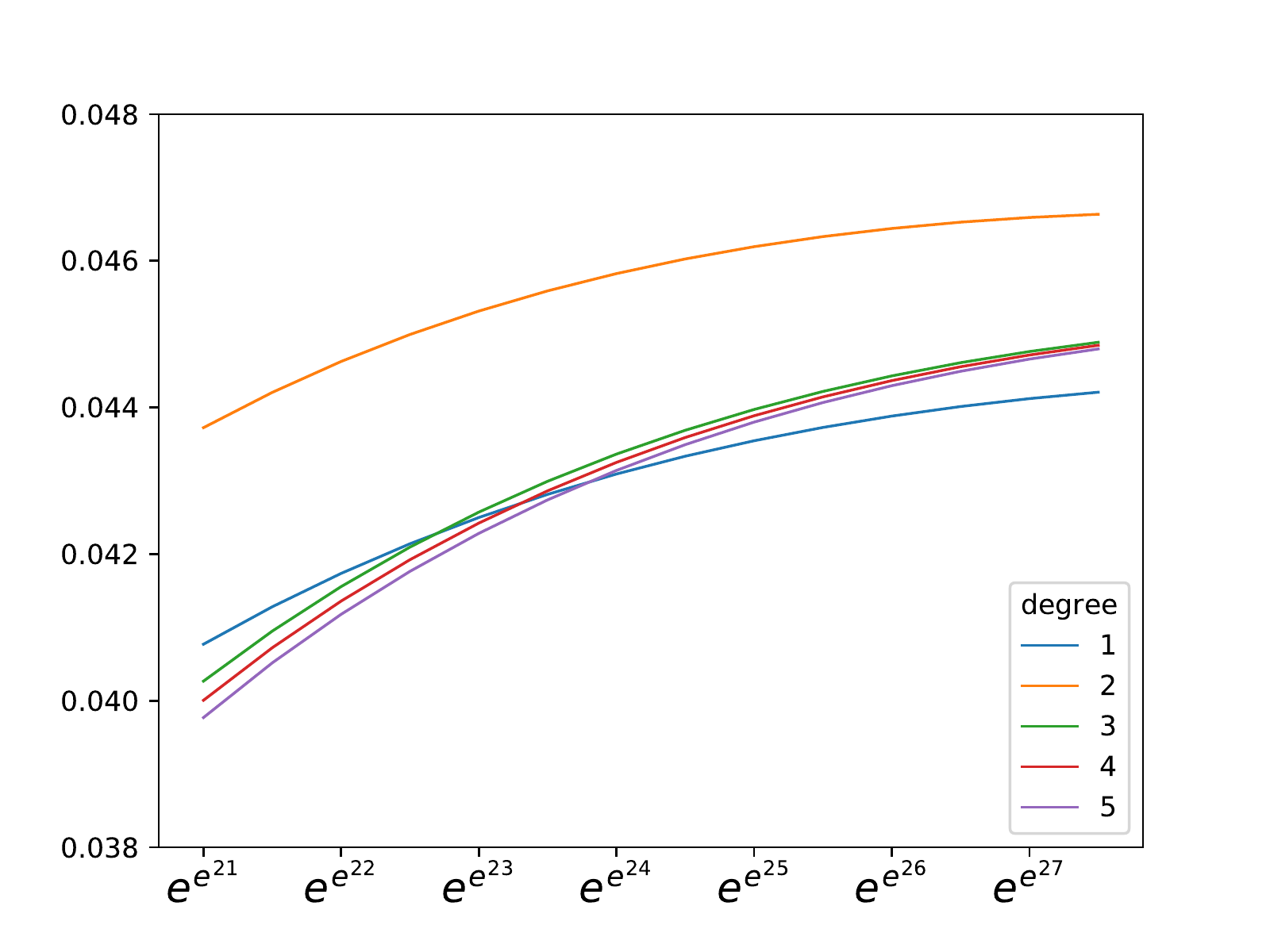}
  \caption{Converging behavior for $\xi_i$. The abscissa axis is in $\log\relax\log$
  scale.}
  \label{fig:convergence}
\end{center}
\end{figure}

These figures raise questions on the relevance of the traditional assumption
$\xi = 0$ for estimating the complexity of NFS in the range which is useful for
cryptographic applications, \emph{i.e.}, $N\leq 2^{20000}$. Indeed, we only start to
observe convergence for $N > \mathrm{exp}(\mathrm{exp}(25))\approx
2^{103881111194}$. Let us also notice that the convergence of $\xi$ to zero is
very slow as $N$ grows, since $\xi(N)\sim 4\log_3 N/(3\log_2 N)$
(Theorem~\ref{th:first_terms}). One could think that adding more and more terms
in the developement of $\xi$ would yield a more precise formula for the
complexity of NFS. However, it turns out that for practical values of $N$, replacing
$\xi$ by $\xi_{i}$ for $i>0$ is possibly even worse since the asymptotic series expansion
of $\xi$ seems to diverge for $N\leq \mathrm{exp}(\mathrm{exp}(25))$.  In
summary, all the asymptotic estimations for $\xi$ that we have at our disposal, including the brutal approximation $\xi = 0$, say
little to nothing about the behavior of the complexity of NFS in the 
range where the algorithm can be used. 

\smallskip

In fact, the expansion of $\xi$ relies on the expansion of $\rho$, and
the latter involves a series that converges only for sufficiently large
values as stated in Proposition~\ref{prop:radius}. We recall that $\rho(u) \sim \exp\left( -u \log u \left( \bfQ^{(i)}(\cX(u), \cY(u))
+ o \left(    \cY(u)^{i}  \right)      \right)  \right)$ where $\bfQ$ is
defined in Proposition \ref{prop:dev_integral_s}. To experimentally assess the
convergence properties of $\rho$ estimations, Figure~\ref{fig:logrho} plots the functions  $u \mapsto \bfQ
^{(i)}(\mathcal{X}(u), \mathcal{Y}(u))$ in function of $u$ for $1\leq i
\leq6$.

\begin{figure}
\begin{center}
\includegraphics[scale = 0.7]{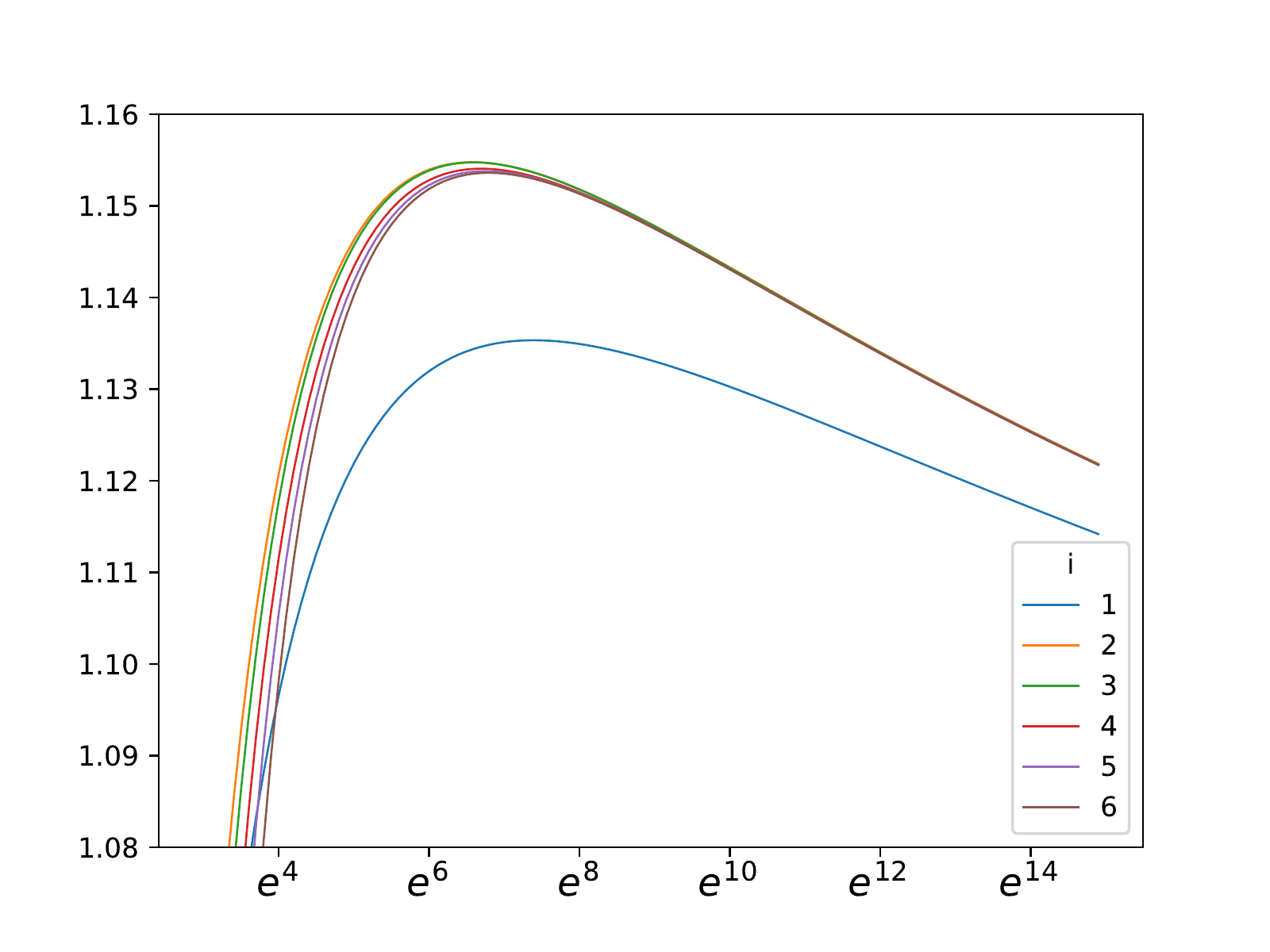}  
  \caption{Plot of the functions $ u \mapsto \bfQ ^{(i)}(\mathcal X(u),
  \mathcal Y(u))$ for $1\leq i \leq 6$, see
    Corollary~\ref{coro:dev_rho_only_rho}.}
  \label{fig:logrho}
\end{center}
\end{figure}

Experimentally, we observe in Figure~\ref{fig:logrho} that the asymptotic series expansion of $\rho$
starts to converge around $u\approx e^8$. When assessing the complexity of NFS, we evaluate
$\rho$ for values of $u$ that have the same order of magnitude than $(\log
N)^{1/3}$. This is consistent with the observed
convergence of the series expansion of $\xi$ for $N >
\mathrm{exp}(\mathrm{exp}(25))$ in Figure~\ref{fig:convergence} since $\exp(25)^{1/3}\approx e^8$.

\medskip

\textbf{Conclusion.} Under a few classical hypotheses and heuristics, we proved
that the function $o(1)$ hidden in the complexity of NFS decreases as $4\log_3
N/(3\log_2 N)$. We have also proposed an algorithm to compute an
asymptotic expansion of the function $\xi$ on which the
complexity of NFS rests. Unfortunately, replacing $\xi$ by a truncation of its
asymptotic expansion, even up to a high degree, in an attempt to have a better
understanding of the NFS complexity may be irrelevant for practical uses. Indeed,
it comes down to replacing a series by its first terms in a range where the
series diverges. Consequently, we recommend prudence when using
Formula~\eqref{eq:nfs-cplx-1} or other truncated asymptotic
expansions of the heuristic complexity of NFS in order to extrapolate keysizes
for cryptography. This stresses the importance of simulation tools
that rely on precise numerical
evaluations of $\rho$, or possibly on actual smoothness tests, to estimate the complexity of NFS.

\printbibliography 

\end{document}